\def\BibTeX{{\rm B\kern-.05em{\sc i\kern-.025em b}\kern-.08em
    T\kern-.1667em\lower.7ex\hbox{E}\kern-.125emX}}
 \pgfplotsset{compat=1.14}
\newtheorem{theorem}{Theorem}
\newtheorem{corollary}{Corollary}
\newtheorem{lemma}{Lemma}
\DeclarePairedDelimiter{\ceil}{\lceil}{\rceil}
\DeclarePairedDelimiter\floor{\lfloor}{\rfloor}
\DeclareMathOperator{\diag}{diag}
\newcommand{\rom}[1]{\uppercase\expandafter{\romannumeral #1\relax}}
\newcommand{\LN}{\bar{N}}
\newcommand{\fiB}{\phi_{\mathrm{B}}}
\newcommand{\fiN}{\phi_{\mathrm{N}}}
\newcommand{\M}{\mathcal{M}}
\newcommand{\Mmin}{n_{\mathrm{B}}}
\newcommand{\Iz}{\mathcal{I}_0}
\newcommand{\IzC}{\mathcal{I}^{\mathrm{c}}_0}
\newcommand{\Io}{{\mathcal{I}}_{0,\mathrm{M}}^{\mathrm{c}}}
\newcommand{\Izo}{{\mathcal{I}}_{0,\mathrm{L}}^{\mathrm{c}}}
\newcommand{\Ioz}{{\mathcal{I}}_{0,\mathrm{R}}^{\mathrm{c}}}
\newcommand{\St}{\mathcal{\tilde{S}}}
\newcommand{\Lzt}{{\tilde{L}_0}}
\newcommand{\fig}{Fig.}    
\newcommand{\tab}{Table}  
\newcommand{\secR}{Section}
\begin{document}

\title{Hybrid Combining of Directional Antennas for Periodic Broadcast V2V Communication \\
\thanks{This research has been carried out in the antenna systems center \emph{ChaseOn} in a project financed by Swedish Governmental Agency of Innovation Systems (Vinnova), Chalmers, Bluetest, Ericsson, Keysight, RISE, Smarteq, and Volvo Cars.}%
\thanks{The authors are with the Communication Systems Group, Department of Electrical Engineering, Chalmers University of Technology, 412 96 Gothenburg, Sweden (e-mail: chouaib@chalmers.se; erik.strom@chalmers.se; fredrik.brannstrom@chalmers.se)}%
}%

\author{Chouaib~Bencheikh~Lehocine, Erik~G.~Str{\"{o}}m,~\IEEEmembership{Senior Member,~IEEE,} and Fredrik~Br{\"{a}}nnstr{\"{o}}m,~\IEEEmembership{Member,~IEEE}}%
\maketitle

\begin{abstract}
A hybrid analog-digital combiner for broadcast vehicular communication is proposed. It has an analog part that does not require any channel state information or feedback from the receiver, and a digital part that uses maximal ratio combining (MRC). We focus on designing the analog part of the combiner to optimize the received signal strength along all azimuth angles for robust periodic vehicle-to-vehicle (V2V) communication, in a scenario of one dominant component between the communicating vehicles (e.g., highway scenario). We show that the parameters of a previously suggested fully analog combiner solves the optimization problem of the analog part of the proposed hybrid combiner. Assuming $L$ directional antennas with uniform angular separation together with the special case of a two-port receiver, we show that it is optimal to combine groups of $\ceil{L/2}$ and $\floor{L/2}$  antennas in analog domain and feed the output of each group to one digital port. This is shown to be optimal under a sufficient condition on the sidelobes level of the directional antenna. Moreover, we derive a performance bound for the hybrid combiner to guide the choice of antennas needed to meet the reliability requirements of the V2V communication links.

\end{abstract}

\begin{IEEEkeywords}
Broadcast V2V communication, directional antennas, hybrid combining, periodic communication.
\end{IEEEkeywords}

\section{Introduction}
\IEEEPARstart{C}{ooperative}~\color{black} intelligent transportation systems (C-ITS) have the promise to make our roads safer, more efficient, and environment-friendly.  
To enable cooperation between intelligent transportation systems, vehicular communication is needed. In contrast to applications supported by traditional mobile broadband communication, C-ITS applications can have ultra high requirements on reliability and latency, and good designs of antenna systems are key to meet those reliability requirements.
Typically, vehicles are equipped with omnidirectional antenna elements that are mounted on the roof. 
However, the vehicle body and the mounting position of antennas affect the omnidirectional characteristics of the radiation patterns\cite{AntPlac2007,AntPlac_2_2014}. Consequently, the distorted omnidirectional pattern may have very low gains or blind spots at certain azimuth angles. In scenarios where the received signal has only one dominant component, e.g., highway scenarios, problems of decoding the packets may arise if the angle of arrival (AOA) of the signal coincides with a very low gain of the receiving antenna. Hence, we need the antenna system to be robust in all directions of arrivals. 

For a robust broadcast antenna system in spite of body and placement effects, multiple antennas with complementary patterns can be combined to achieve an effective pattern with better omnidirectional characteristics than the distorted patterns of single antennas. In particular, directional antennas that are pointing towards different directions can be used for this purpose. Since the main contribution of these antennas is through the main lobe,  we may have more flexibility---compared to omnidirectional antennas---in mounting them such that the distortions do not coincide with their main lobes. That may reduce the severity of distortions and lead to an overall better effective pattern. 

The classical way of combining antennas is done in the digital domain through either selection combining (SC), equal gain combining (EGC), or maximal ratio combining (MRC). In general, these fully digital solutions require a radio frequency (RF) chain per antenna element which results in high cost and power consumption.

A recent trend in deploying multiple antenna systems is the use of hybrid analog-digital combiners/beamformers. Hybrid combiners (HCs) require less number of RF chains than antennas, which translates to a reduction of cost and power consumption with respect to the fully digital solutions. HCs have been studied extensively and state-of-the-art designs and paradigms are well summarized in~\cite{hybrid,hybrid2,hybrid3}. 
  The main architectures considered use phase shifters in the RF part, which limits the HC to have a constant-modulus complex RF weight. Design strategies of HCs take into account instantaneous or average channel state information (CSI) to attempt finding the optimal analog and digital combining coefficients~\cite{hybrid2}. The design objective is, typically, maximizing the rate or the signal-to-noise ratio (SNR) experienced by the user~\cite{hybrid}. An example of a study of HCs that take into account vehicular users is~\cite{config2019}. There, a general framework is set for the design of hybrid beamformers/combiners according to three main architectures. Performance is evaluated in a cellular infrastructure-to-everything scenario considering vehicular users. 
  
  Aside from the fully digital and hybrid combining schemes, fully analog solutions have been investigated in several publications including~\cite{AC} and~\cite{mainACN}. In~\cite{AC}, three architectures of analog combiners (ACs) have been proposed. The different architectures have varying complexity depending on the combination of RF components used (variable gain amplifiers, variable gain amplifiers and sign switches or variable phase shifters).  
  Optimization of the RF weights in these architectures was done with the objective of maximizing the SNR at the output of the combiner, assuming perfect knowledge of CSI. In~\cite{mainACN}, an analog combiner based on phase shifters was proposed for broadcast vehicle-to-vehicle (V2V) communication. V2V broadcast communication is mainly based on periodic dissemination of status messages. The solution presented in~\cite{mainACN} takes advantage of the periodic behavior of the status messages and the redundancy of their content and attempt to find the optimal phase shifters that minimize the probability of consecutive packet errors in the system. That is done in absence of CSI or feedback from the receiver. Hence, beside the low cost and low power consumption, this AC has very low complexity. However, it does not leverage on the benefit of digital combining.

To take advantage of digital combining high performance benefits and analog combining low-cost, low-complexity benefits, we propose, in this work, a hybrid analog-digital combiner that is tailored for periodic broadcast V2V communication. We take a modular approach in our HC design, that is, we assume that the digital baseband combining is done using a multiport receiver that applies standard MRC. Given that we do not have access to CSI or any feedback from the receiver, we attempt to design the analog part of the HC to minimize the error probability of consecutive status messages. That is a similar design approach to what was followed in~\cite{mainACN}. The proposed HC has the advantage of having the analog part independent of the digital part. Hence, the RF part can be located closer to where the antennas are placed. Moreover, we use directional antennas in this setting and we investigate the optimal HC configuration, i.e., the optimal way to connect antennas to the digital ports, in the special case of a two-port digital receiver. The main contributions of this work can be summarized as follows.
\begin{itemize}
	\item We propose a hybrid low-cost combining scheme for
	broadcast vehicular communication that results in robust performance at all AOAs.  
	\item We show that the obtained parameters for the phase shift network in~\cite{mainACN} are also a solution to the optimization of the analog part of HC when MRC is used in baseband. Depending on the setup, the proposed HC either minimizes or provides a lower bound on the burst error probability of the full system. 
	\item  We analytically identify the optimal number of antennas per port when feeding $L$ directional antennas to the HC with two digital ports. It is found that it is optimal to feed $\ceil{L/2}$ antenna to one port and $\floor{L/2}$ antenna to the other port, when a sufficient condition on the sidelobes of the antennas is satisfied.
	\item In addition, we derive a performance bound of the HC, which can be used as a guideline when designing the antenna system.
\end{itemize}


\section{System Model}\label{S2}
In this section we start by presenting the system model. We note that this is the system model as presented in \cite{mainACN}, extended to multiple ports receiver case.
\subsection{Antenna System}
We assume the use of $L$ antennas that are mounted at the same height on the vehicle. The antennas are vertically polarized \cite{patch}. Taking into account the relative comparable height of different vehicles, we assume that the incident waves arrive in the azimuth plane. Thus, we can represent the far-field response of the antennas as a function of the azimuth angle $\phi$ only. That is,
\begin{equation}
g_l(\phi,\theta)=g_l(\phi, \pi/2)= g_l(\phi),
\end{equation}
where $\theta$ is the elevation angle and $l$ is the antenna index.

\subsection{Channel Model}
We consider a scenario of vehicle-to-vehicle communication in a scarce multi-path (MP) environment with a dominant component. This component could be a line-of-sight (LOS) or a first order single bounce reflection from a local reflector. This is typical in communication between cars in highways. In such scenario, it has been shown through measurements that only a few MP components with small azimuth angular spread  contribute to the total received power \cite{AngSpread2011}. Thus, the channel gain at an antenna output can be modeled by a single dominant component arriving at a particular AOA $\phi$. Taking into account the far-field function of the receiver antennas, the channel gain of antenna $l$  can be stated as \cite[Eq.~8]{karedal2009geometry} 
\begin{equation} 
h_l(t)=\tilde{a}(t)g_l(\phi)e^{-\jmath \tilde{\Omega}_{l}(t) },
\end{equation}
where $\tilde{a}(t)$ and $\tilde{\Omega}_{l}(t)$ are the complex gain and the phase shift, respectively, of the dominant MP component with AOA $\phi$, received at antenna $l$. The phase shift depends on the distance as $\tilde{\Omega}_{l}(t)=\frac{2\pi}{\lambda} d_l(t)$, where $\lambda$ is the wavelength of the carrier frequency and $d_l(t)$ is the distance of the path followed from the transmitter to the $l^{\mathrm{th}}$ receiver antenna. Taking the antenna with index $l=0$ as a reference, we can  express the relative phase differences with respect to the reference antenna as
$\Omega_{l}(t)=\tilde{\Omega}_{l}(t)-\tilde{\Omega}_{0}(t)$ with $\Omega_{0}(t)=0$.
The channel model can be expressed consequently as
\begin{equation} \label{channel}
h_l(t)=a(t)g_l(\phi)e^{-\jmath \Omega_{l}(t) },
\end{equation}
with $a(t)=\tilde{a}(t)e^{-\jmath \tilde{\Omega}_{0}(t) }$. The relative phase differences can be assumed slowly varying over a period of a few seconds. This holds if the main component between the two vehicles with approximately the same AOA is still present over that duration. From~\eqref{channel} we can see that in case an antenna $l$ has poor gain at the AOA of the signal, the received power will be low, which could lead to the loss of the received packet. Thus, we aim to design our analog combiner such that the overall scheme is robust for any AOA.

\subsection{IEEE802.11p Broadcast Communication}
Cooperative awareness in intelligent transportation systems that are based on IEEE802.11p, is created through the dissemination of periodic messages that contain status information like position, speed, heading, lane position, etc. These are called cooperative awareness messages (CAMs). CAMs are transmitted periodically with a frequency that is dependent on both the change of status of the transmitting vehicle (e.g., change of lane), and on the traffic on the radio channel~\cite{CAM}. The period between two consecutive CAMs, $T$, has a minimum duration of  $100$ ms and a maximum  of $1000$ ms \cite{CAM}, while the duration of a CAM packet $T_{\mathrm{m}}$ is in the range of $0.5-2$ ms \cite{mainACN}. We note that the period is much longer than the packet duration, $T\gg T_{\mathrm{m}}$.
Due to the high frequency of CAM dissemination, their content is highly correlated over the time of few periods.  At the level of a C-ITS application running on a vehicle, if a CAM packet from a neighboring vehicle is lost, the C-ITS application can still use the previously available status information about the neighboring vehicle and function properly. This is provided that the age of the already available status information is not too large (i.e., the available status information is not outdated). Following this line of thought, it has been proposed in several works including~\cite{Aoinfo}, to assess the reliability of broadcast V2V systems using age-of-information rather than the classical overall packet delivery ratio metric. 
The age-of-information at a certain time instance is equal to the latency of the latest successfully received packet plus the time elapsed since its reception.
We recall that latency of a packet is commonly defined as the time elapsed between the generation and reception of the packet. If we consider the latency as negligible or fixed from packet to packet, the randomness in the age-of-information peak values is determined by the time separating two consecutive successful CAM packet receptions.
We note that packet inter-reception time (acronymed IRT or PIR) and T-window are similar metrics and they are elaborated in~\cite{PIR,PIR0}, and~\cite{Twindow}, respectively.
To relate the application-level age-of-information metric to the reliability of V2V communication, a parameter labeled Burst Error probability (BrEP) has been proposed and used in~\cite{mainACN} to design an analog combiner for V2V communication. Using this parameter, outage is declared when $K$ consecutive CAMs are not decoded correctly.
In our work we follow a similar approach, that is, taking into account the correlated status information (position, speed, heading, lane position, etc.) contained in consecutive CAMs, we assume that only some packets out of a burst of $K$ consecutive CAMs need to be decoded correctly for the proper functionality of a particular C-ITS application that depend on their content.  Therefore, we target the minimization of the BrEP of $K$ CAMs, which if latency is negligible (or constant) is equivalent to minimizing the probability of not satisfying the maximum allowable age-of-information $A_{\text{max}}= KT$ (or $A_{\text{max}}= KT+\text{const.}$) set by the C-ITS application. Aside from reliability, robustness is an important quality of V2V broadcast systems.
To take that into account in our design, we consider a scenario where the AOA remains approximately constant for the time it takes to transmit $K$ packets. In such case, there is a risk of losing all the $K$ packets if the AOA coincides with a blind spot or a notch of low gain of the radiation pattern of the receiving antenna system. Hence, we target to combine the antennas to minimize the probability that a burst of $K$ consecutive packets is received with error when the AOA coincides with the worst-case AOA.

\subsection{Analog Combiner}
Given $L$ antennas, we use an AC
to combine these antenna signals to $P<L$ receiver ports. The digital receiver performs MRC based on the estimated effective channel and noise variance after the analog combining. The AC does not use any CSI or feedback from the receiver. Therefore, we restrict it to be composed of time varying phase shifters and adders. The phase shifters are modeled as a linear function of time. The complex analog gain from antenna $l$ to port $p$ is modeled as
\begin{align}\label{AC}
s_{l,p}e^{\jmath(\alpha_{l,p}t+ \beta_{l,p})}
\end{align}
where $\alpha_{l,p}$ and $\beta_{l,p}$ are the slope and the initial phase offset, respectively, of the time-varying analog phase shifter, and $s_{l,p} \in \mathbb{R} $ is the gain from the output of antenna $l$ to the input of port $p$. For the AC to be power-conserving, it must satisfy 
\begin{equation}
\label{eq:1}
\sum_{p=0}^{P-1} s_{l,p}^2 = 1. 
\end{equation}
\begin{figure*}[ht]
		\centering
	\subfigure[Sub-connected configuration]{
			\includegraphics[width=0.9\columnwidth]{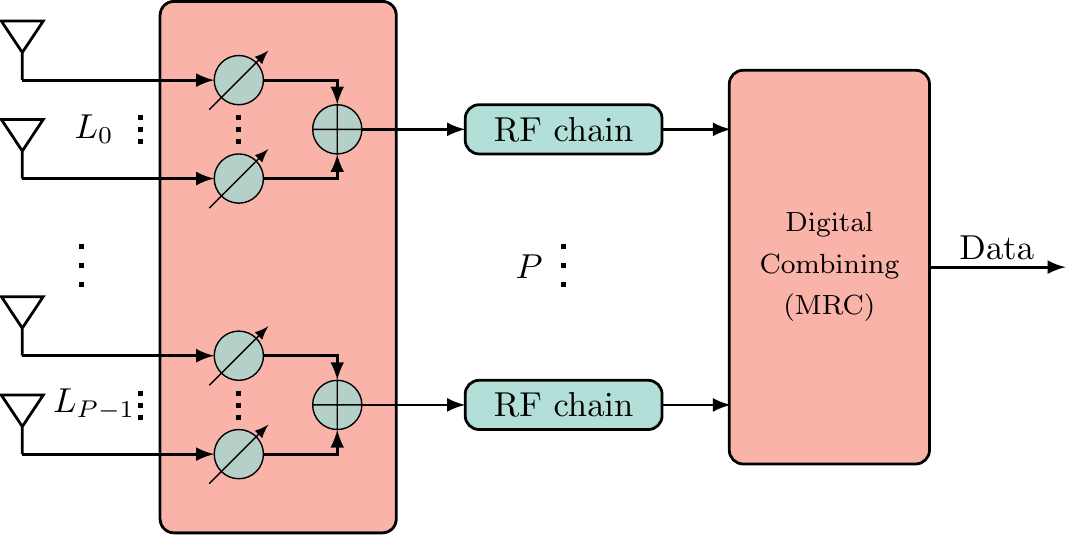}
			\label{Fig:SCHEME:a}
		}
	\subfigure [Fully-connected configuration]
	{
	\includegraphics[width=0.9\columnwidth]{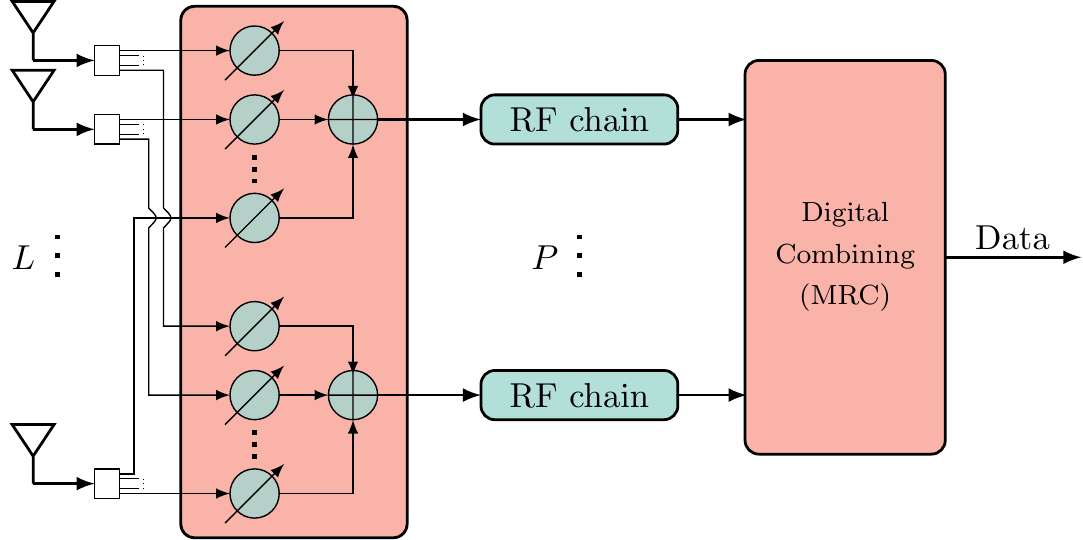}
	\label{Fig:SCHEME:b}
}
\caption{Sub-connected and fully connected architectures of Hybrid combiners.}	
\label{Fig:SCHEME}
\end{figure*}
We define $\mathbf{S}\in\mathbb{R}^{L\times P}$ to be the matrix with elements
$[\mathbf{S}]_{l,p}=s_{l,p}$. Depending on how antennas are connected to RF chains, the HC can have either a sub-connected, a fully-connected or an overlapping configuration~\cite{hybrid}. As shown \fig~\ref{Fig:SCHEME:a}, in the sub-connected configuration
each antenna is connected to only one RF chain.
This results in $P$ disjoint groups, each group comprised of $L_p>0$ antennas and $\sum_{p=0}^{P-1} L_p = L$. The set of sub-connected configurations can be expressed as 
\begin{align}
\label{S}
\mathcal{S} =
\Bigl\{&\mathbf{S}\in\{0,1\}^{L\times P}: \nonumber\\
&\mathbf{S}^\textsf{T}\mathbf{S} = \diag(L_0, L_1, \ldots, L_{P-1}), \sum_{p=0}^{P-1} L_p = L\Bigr\}.
\end{align}
In the fully connected configuration all antennas are connected to all RF chains as shown in \fig~\ref{Fig:SCHEME:b}, while in the overlapping configuration at least one antenna is connected to more than one RF chain. Although the fully connected and overalpping configurations are  more general than the sub-connected configuration, they have the drawback that the noise components at the
receiver ports might be correlated, which complicates the optimal digital
receiver. Moreover, the fully connected and overlapping configurations imply the use of power
splitters, which leads to insertion loss and added complexity. For
these reasons, we chose in this paper to design the HC following a sub-connected configuration.

\section{Design of the Analog Combiner}\label{S3}
\subsection{SNR after the Hybrid Combining}
We start by deriving the SNR expression of a received packet after the analog combining and MRC, where the analog combining is performed in sub-connected fashion with linear time varying phase shifters.
We have that the received signal at the output of antenna $l$ is given by
  \begin{equation}\label{SM:y}
  y_l(t)=h_l(t)x(t) +n_l(t),
  \end{equation}
  where $x(t)$ is the transmitted signal, $h_l(t)$ is the channel gain at the antenna input modeled by~\eqref{channel} and $n_l(t)$ is  independent, zero-mean complex Gaussian and white noise over the signal bandwidth, with average power $\sigma_{\mathrm{n}}^2$. Given the analog combiner complex gain modelled by~\eqref{AC}, the input to receiver port $p$ is
\begin{align}
r_p(t)&=\sum_{l=0}^{L-1}s_{l,p}e^{\jmath(\alpha_{l,p}t + \beta_{l,p})} y_{l}(t) \nonumber\\
&=a(t)x(t)\sum_{l=0}^{L-1}s_{l,p}g_l(\phi)e^{-\jmath\big( \Omega_{l}(t)-\alpha_{l,p}t-\beta_{l,p}\big) }\nonumber\\
&\qquad\qquad+\sum_{l=0}^{L-1}s_{l,p} n_l(t)e^{\jmath(\alpha_{l,p}t+\beta_{l,p} ) }\nonumber\\
&=a(t)x(t) c_p( t)+\tilde{n}_p(t),\label{port}
\end{align}
where $c_p(t)$ and $\tilde{n}_p(t)$ are the equivalent channel gains and noise signals respectively, at the input of receiver port $p$. Since $\tilde{n}_p(t)$  is a sum of phase shifted zero-mean complex Gaussian white noise signals, it has a Gaussian distribution $\mathcal{CN}(0, L_p \sigma_{\mathrm{n}}^2)$, where $L_p=\sum_{l=0}^{L-1} s_{l,p}$ is the number of antennas connected to port $p$.

 We assume that the time variation of the equivalent gain $c_p(t) $ is negligible over the duration of a packet $T_{\mathrm{m}}$, and therefore we make the approximation
\begin{align}\label{capp}
c_p( t) &\approx c_p( kT),~
 kT\leq t\leq kT+T_{\mathrm{m}},~k=0,\ldots , K-1.
\end{align}
Moreover, as stated before, in our scenario of interest the relative phase difference $\Omega_{l}(t)$ can be assumed to experience negligible variation over a duration $KT$~\cite{mainACN} and thus, it can be approximated as  $\Omega_{l}(t)\approx \Omega_{l}\triangleq \Omega_{l}(0)$. Furthermore, we define $\psi_{l,p}\triangleq \mod(\Omega_l-\beta_{l,p}-\angle(g_l(\phi)),2\pi) $ to accommodate for the effective phase response at antenna $l$
 and we write $c_p$ adopting the previous approximations as,
\begin{equation} \label{arg2}
\begin{split}
c_p( kT)&=\sum_{l=0}^{L-1}s_{l,p}|g_{l}(\phi)|e^{-\jmath( \psi_{l,p}-\alpha_{l,p}kT)}.\\
\end{split}
\end{equation} 
The receiver uses a MRC weight vector to combine the signals. In the sub-connected configuration there is no correlation between the noise processes at the input of the receiver, and therefore the optimal MRC weight applied at port $p$ is $c_p^*(kT)/L_p$~\cite{MRC2002}.
Following that, the SNR of the $k^{\mathrm{th}}$ packet after the MRC can be found to be 
\begin{equation}\label{snr}
\gamma_k =\frac{P_\mathrm{r}}{\sigma^{2}_\mathrm{n}}\sum_{p=0}^{P-1} \frac{|c_p( kT)|^2}{L_p}, 
\end{equation}
where $P_\mathrm{r}=\mathbb{E}\{|a(t)x(t)|^2\}$ is the average signal power, and it is assumed to remain constant over the period of $K$ consecutive packets.

We would like to express $\gamma_k$ as a function of the AC parameters that need to be optimized. We note that only phase slopes $\alpha_{l, p}$ of the phase shifters that are connected to a port, i.e., for the pairs $(l, p)$ such that $s_{l, p} = 1$, need to be considered in the optimization problem.
Therefore,  we define the vectors $\boldsymbol{\alpha}_p\in\mathbb{R}^{L_p}$, $p=0, 1, \ldots, P-1$ where
\begin{align}
\{ [\boldsymbol{\alpha}_p]_i: i=0, \ldots, L_p-1\}=\{&\alpha_{l,p}:  s_{l,p}=1, \nonumber\\
 &\quad l=0, ..., L-1\}.
\end{align}
We assume that the mapping between the elements of the sets is such that it assigns $i=0$ in the first set to the smallest $l$ in the second set, and $i=1$ to the second smallest $l$ and so on. We also define the vectors $\boldsymbol{\psi}_p$ as
$\{ [\boldsymbol{\psi}_p]_i, i=0, ..., L_p-1        \}=\{ \psi_{l,p}:  s_{l,p}=1, l=0, ..., L-1\} $, where the mapping from $i$ to $l$ is similar to the mapping used when defining the vectors $\boldsymbol{\alpha }_p$. Moreover, we define $L$-element vectors, $\boldsymbol{\alpha}$ and $\boldsymbol{\psi}$ as $ \boldsymbol{\alpha}=[\boldsymbol{\alpha}_0^\textsf{T}, \boldsymbol{\alpha}_1^\textsf{T}, ..., \boldsymbol{\alpha}_{P-1}^\textsf{T}]^\textsf{T}$ and  $ \boldsymbol{\psi}=[\boldsymbol{\psi}_0^\textsf{T}, \boldsymbol{\psi}_1^\textsf{T}, ..., \boldsymbol{\psi}_{P-1}^\textsf{T}]^\textsf{T}$. Then, we can readily express the SNR in~\eqref{snr} as 
$\gamma_k(\phi, \mathbf{S}, \boldsymbol{\psi},\boldsymbol{\alpha})$.

\subsection{Optimization Problem of the Analog Combiner}
To formulate the optimization problem of the analog part of the HC, the burst error probability needs to be derived. Given a burst of $K$ packets, under the assumptions that packets are of the same length, are transmitted using the same modulation and coding scheme, and packet errors are statistically independent, the BrEP  is
\begin{equation}\label{PB}
P_{\mathrm{B}}(\phi, \mathbf{S}, \boldsymbol{\psi}, \boldsymbol{\alpha})=\prod_{k=0}^{K-1}P_{\mathrm{e}}\big(\gamma_k(\phi, \mathbf{S}, \boldsymbol{\psi}, \boldsymbol{\alpha})\big),
\end{equation}
where $P_{\mathrm{e}}(\gamma_k(\phi, \mathbf{S}, \boldsymbol{\psi},\boldsymbol{\alpha}))$ is the packet error probability (PEP) of the $k^\mathrm{th}$ packet.
Since we are interested in robustness, the design goal is to
minimize the bust-error probability for the worst-case propagation,
i.e., when all power arrives in the least favorable AOA,
and the worst-case effective phase response at the antennas outputs,
$\boldsymbol{\psi}$. Design variables are the grouping configuration and phase
slopes defined by $\mathbf{S}$ and
$\boldsymbol{\alpha}$, respectively.
Hence, the optimal
analog combiner is determined by $\mathbf{S}^\star$ and
$\boldsymbol{\alpha}^\star$, where 
\begin{equation}
\label{eq:3}
(\mathbf{S}^\star, \boldsymbol{\alpha}^\star)
\triangleq \arg
\inf_{\substack{\mathbf{S}\in\mathcal{S}\\ \boldsymbol{\alpha}\in\mathbb{R}^{L}}}
\sup_{\substack{\phi\in[0, 2\pi)\\ \boldsymbol{\psi}\in[0,
		2\pi)^{L}}}
P_{\mathrm{B}}(\phi, \mathbf{S}, \boldsymbol{\psi}, \boldsymbol{\alpha}).
\end{equation}
We will start by finding the optimal phase slopes for a given
configuration $\mathbf{S}$, i.e.,
\begin{equation}\label{phases}
\boldsymbol{\alpha}^\star(\mathbf{S})
= \arg
\inf_{\boldsymbol{\alpha}\in\mathbb{R}^{L}}
\sup_{\substack{\phi\in[0, 2\pi)\\ \boldsymbol{\psi}\in[0,
		2\pi)^{L}}}
P_{\mathrm{B}}(\phi, \mathbf{S}, \boldsymbol{\psi}, \boldsymbol{\alpha}),
\end{equation}
and then find the optimal configuration as
\begin{equation}\label{config}
\mathbf{S}^\star
= \arg
\inf_{\mathbf{S}\in\mathcal{S}}
\sup_{\substack{\phi\in[0, 2\pi)\\ \boldsymbol{\psi}\in[0,
		2\pi)^{L}}}
P_{\mathrm{B}}(\phi, \mathbf{S},  \boldsymbol{\psi}, \boldsymbol{\alpha}^\star(\mathbf{S})).
\end{equation}

\subsection{Optimal Phase Slopes}
The choice of the phase slopes $\alpha_{l,p}$ applied to the antennas, is made according to~\eqref{phases}. 
The solutions to this optimization problem depends on the PEP function. For the special case of exponential PEP function defined as
\begin{equation}\label{burst}
P_{\mathrm{e}}(\gamma)=a\exp(-b\gamma),
\end{equation}
where $a,b>0$ are constants, the BrEP can be expressed as
\begin{equation}\label{PB2}
P_{\mathrm{B}}(\phi, \mathbf{S}, \boldsymbol{\psi}, \boldsymbol{\alpha})=a^{K}\exp\big(-b\sum_{k=0}^{K-1}\gamma_k(\phi, \mathbf{S}, \boldsymbol{\psi}, \boldsymbol{\alpha})\big).
\end{equation}
To solve~\eqref{phases}, we first formulate a problem to find the optimal phase slopes for any AOA, that is $\boldsymbol{\alpha}^\star(\phi, \mathbf{S})$. Then, we can deduce the optimal phase slopes for the worst-case AOA. We have,
\begin{align}\label{subphases}
\boldsymbol{\alpha}^\star(\phi,\mathbf{S})&=\arg\inf_{\boldsymbol{\alpha}\in \mathbb{R}^{L}}
\sup_{\boldsymbol{\psi}\in[0,
		2\pi)^{L}}
P_{\mathrm{B}}(\phi, \mathbf{S}, \boldsymbol{\psi}, \boldsymbol{\alpha }).
\end{align}
Introducing the logarithm function to~\eqref{PB2} and substituting in the previous equation, we get 
\begin{align}
\boldsymbol{\alpha}^\star(\phi,\mathbf{S})&=\arg
\inf_{\boldsymbol{\alpha}\in \mathbb{R}^{L}}
\sup_{\boldsymbol{\psi}\in[0,
		2\pi)^{L}}
\ln\big(P_{\mathrm{B}}(\phi, \mathbf{S}, \boldsymbol{\psi}, \boldsymbol{\alpha })\big)\nonumber\\
&= \arg
\sup_{\boldsymbol{\alpha}\in \mathbb{R}^{L}}
\inf_{\boldsymbol{\psi}\in[0,
		2\pi)^{L}}
 \sum_{k=0}^{K-1}\gamma(\phi,\mathbf{S},\boldsymbol{\psi},\boldsymbol{\alpha},k)\nonumber\\
 &= \arg
 \sup_{\boldsymbol{\alpha}\in \mathbb{R}^{L}}
 \inf_{\boldsymbol{\psi}\in[0,
 		2\pi)^{L}}
\overbrace{ \sum_{p=0}^{P-1}
 \underbrace{\sum_{k=0}^{K-1} \frac{|c_p(kT    )|^2}{L_p}}_{= J_p(\phi, \mathbf{S}, \boldsymbol{\psi}_p,\boldsymbol{\alpha}_p)} }^{=J(\phi, \mathbf{S}, \boldsymbol{\psi}, \boldsymbol{\alpha })}.\label{obj1}
\end{align}

We note that the elements $\boldsymbol{\alpha }_p$ that compose the vector $\boldsymbol{\alpha } $ are independent of each other. The same thing applies to $\boldsymbol{\psi}$. Following that, we can decompose the joint optimization of $J(\phi, \mathbf{S}, \boldsymbol{\psi}, \boldsymbol{\alpha })$ to the separate optimization of the terms $J_p(\phi, \mathbf{S}, \boldsymbol{\psi}_p,\boldsymbol{\alpha}_p)$.  After separate optimization of these terms, we can readily reconstruct the optimal phase slopes vector for the overall system as $\boldsymbol{\alpha}^\star(\phi,\mathbf{S})=[\boldsymbol{\alpha}^\star_{ 0}(\phi,\mathbf{S}), \boldsymbol{\alpha}^\star_{ 1}(\phi,\mathbf{S}), ..., \boldsymbol{\alpha}^\star_{ P-1}(\phi,\mathbf{S})]^\textsf{T}$ where  
\begin{equation}\label{JP}
\boldsymbol{\alpha}^\star_{p}(\phi, \mathbf{S})
= \arg
\sup_{\boldsymbol{\alpha}_p \in \mathbb{R}^{L_p}}
\inf_{ \boldsymbol{\psi}_p\in[0,
		2\pi)^{L_p}}
J_p(\phi, \mathbf{S}, \boldsymbol{\psi}_p,\boldsymbol{\alpha}_p).
\end{equation}
The problem in~\eqref{JP}, i.e., the optimization of an analog combiner for a single port receiver, has been tackled in\cite{mainACN}. The solutions to this problem are stated in the following theorem.
\begin{theorem}\label{Th1}[adopted from \cite{mainACN}] The optimum of the objective function $J_{p}$ defined in~\eqref{obj1} for a given configuration $\mathbf{S}$, is lower bounded for any AOA $\phi$ and $p=0, 1, \ldots, P-1$ as,
	\begin{equation}\label{optim}
	\begin{split}
	J^\star_{p}(\phi, \mathbf{S}) 
	&\geq K\frac{1}{L_p}\sum_{l=0}^{L-1}s_{l,p}|g_{l}(\phi)|^2,\qquad L_p \leq K,
	\end{split}
	\end{equation}
	this lower bound is achievable when selecting 
	\begin{align*}
	\boldsymbol{\alpha}_{p}= \tilde{\boldsymbol{\alpha}}_{p} (\mathbf{S}),
	\end{align*}
	where 
	\begin{align}\label{opt_ph}
		[\tilde{\boldsymbol{\alpha}}_{p}(\mathbf{S}) ]_i=
		\frac{i2\pi}{KT}, \quad  i=0, 1, \ldots, L_p-1.  
	\end{align}
	 For the special case of  $L_p\leq3$, this lower bound is tight, thus 
	\begin{equation}\label{optim2}
	J^\star_{p}(\phi, \mathbf{S})= K\frac{1}{L_p}\sum_{l=0}^{L-1}s_{l,p}|g_{l}(\phi)|^2,\quad  L_p \leq K \text{ and } L_p\leq 3,
	\end{equation}
	and
	\begin{align}\label{eq:phase_ind}
	\boldsymbol{\alpha}^\star_{p}(\phi,\mathbf{S})= \tilde{\boldsymbol{\alpha}}_p(\mathbf{S}), \quad  L_p \leq K \text{ and } L_p\leq 3.
	\end{align}
	
\end{theorem}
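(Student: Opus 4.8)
The plan is to view $J_p$ as the energy of a coherent sum of the $L_p$ branches feeding port $p$ and to isolate the only part of it that the optimization can touch. Substituting \eqref{arg2} into the definition of $J_p$ in \eqref{obj1} and expanding the squared modulus, I would write, with $i,i'$ ranging over the $L_p$ antennas connected to port $p$, $g_i$ the modulus $|g_l(\phi)|$ of the $i$-th such antenna, $\psi_i\triangleq[\boldsymbol{\psi}_p]_i$ and $\alpha_i\triangleq[\boldsymbol{\alpha}_p]_i$,
\begin{equation*}
L_p J_p=\sum_{k=0}^{K-1}|c_p(kT)|^2=K\sum_{i=0}^{L_p-1}g_i^2+\sum_{i\neq i'}g_ig_{i'}\,e^{-\jmath(\psi_i-\psi_{i'})}\,\rho_{ii'},
\end{equation*}
where $\rho_{ii'}\triangleq\sum_{k=0}^{K-1}e^{\jmath(\alpha_i-\alpha_{i'})kT}$ is the length-$K$ correlation of the sampled phase ramps of branches $i$ and $i'$ (with $\rho_{ii}=K$). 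The diagonal part equals $K\sum_i g_i^2$, i.e.\ $L_p$ times the claimed bound, and is independent of both $\boldsymbol{\alpha}_p$ and $\boldsymbol{\psi}_p$; the entire optimization is played out in the cross term.

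For the lower bound and its achievability I would evaluate the cross term at $\boldsymbol{\alpha}_p=\tilde{\boldsymbol{\alpha}}_p$. With $[\tilde{\boldsymbol{\alpha}}_p]_i=i2\pi/(KT)$ from \eqref{opt_ph} the ramp samples become $e^{\jmath\alpha_i kT}=W^{ik}$, $W\triangleq e^{\jmath2\pi/K}$, so that $\rho_{ii'}=\sum_k W^{(i-i')k}$. Because $L_p\leq K$, for $i\neq i'$ we have $1\leq|i-i'|\leq L_p-1<K$, hence $i-i'\not\equiv 0\pmod{K}$ and this geometric sum vanishes. The cross term is then identically zero for every $\boldsymbol{\psi}_p$, giving $J_p=\tfrac{K}{L_p}\sum_i g_i^2$ at $\tilde{\boldsymbol{\alpha}}_p$ irrespective of $\boldsymbol{\psi}_p$. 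Since $J_p^\star=\sup_{\boldsymbol{\alpha}_p}\inf_{\boldsymbol{\psi}_p}J_p\geq\inf_{\boldsymbol{\psi}_p}J_p\big|_{\tilde{\boldsymbol{\alpha}}_p}$, this proves the lower bound \eqref{optim} and shows it is attained by $\tilde{\boldsymbol{\alpha}}_p$; the hypothesis $L_p\leq K$ enters only to guarantee the roots-of-unity orthogonality.

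The crux, and the main obstacle, is the matching upper bound $J_p^\star\leq\tfrac{K}{L_p}\sum_i g_i^2$, which asserts that the designer cannot keep the cross term positive against an adversarial $\boldsymbol{\psi}_p$. I would settle it by a phase-averaging identity: for any fixed $\boldsymbol{\alpha}_p$, averaging the cross term over $\boldsymbol{\psi}_p$ uniform on $[0,2\pi)^{L_p}$ gives zero, because $\mathbb{E}[e^{-\jmath(\psi_i-\psi_{i'})}]=0$ for $i\neq i'$. A continuous function on the torus with zero mean must take a value $\leq 0$, so $\inf_{\boldsymbol{\psi}_p}(\text{cross term})\leq 0$ for every $\boldsymbol{\alpha}_p$, whence $\inf_{\boldsymbol{\psi}_p}J_p\leq\tfrac{K}{L_p}\sum_i g_i^2$ and therefore $J_p^\star\leq\tfrac{K}{L_p}\sum_i g_i^2$. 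Combined with the achievability step this yields the equality \eqref{optim2} and identifies $\tilde{\boldsymbol{\alpha}}_p$ as an optimizer \eqref{eq:phase_ind}.

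I would close by reconciling this with the stated range $L_p\leq 3$. The averaging step never invokes $L_p\leq 3$, so for the objective $J_p$ it already delivers tightness throughout $L_p\leq K$; what genuinely degrades with growing $L_p$ is an \emph{explicit} description of the worst-case $\boldsymbol{\psi}_p$. For $L_p=2$ there is a single cross cosine, driven to $-2g_0g_1|\rho_{01}|\leq 0$; for $L_p=3$ the two independent phase differences still suffice to push the sum of three cosines to a non-positive value; beyond $L_p=3$ such a hand construction becomes unwieldy, which is precisely why \cite{mainACN} records the tight case only for small $L_p$. Thus the real difficulty is confined to packaging the tightness direction, either through the clean phase-averaging identity or through the case-by-case phase construction underlying the $L_p\leq 3$ statement.
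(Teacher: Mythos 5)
The paper does not actually contain a proof of Theorem~\ref{Th1}; it defers entirely to the appendix of \cite{mainACN}. So the comparison can only be against what that reference is known to do, namely expand $\sum_k|c_p(kT)|^2$ into a diagonal part plus cross terms, verify that the uniform slopes $[\tilde{\boldsymbol{\alpha}}_p]_i=i2\pi/(KT)$ annihilate the cross terms via roots-of-unity orthogonality (this is where $L_p\le K$ enters), and then establish the converse for $L_p\le 3$ by an explicit worst-case construction of $\boldsymbol{\psi}_p$. Your decomposition and achievability step coincide with that route and are correct: $\rho_{ii'}=0$ for $i\neq i'$ because $1\le|i-i'|\le L_p-1<K$, so $J_p\equiv\frac{K}{L_p}\sum_i g_i^2$ independently of $\boldsymbol{\psi}_p$, which yields both \eqref{optim} and its achievability. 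Where you genuinely depart is the converse: the phase-averaging argument ($\mathbb{E}_{\boldsymbol{\psi}_p}[\text{cross term}]=0$ for any fixed $\boldsymbol{\alpha}_p$, hence the continuous real cross term attains a nonpositive value, hence $\inf_{\boldsymbol{\psi}_p}J_p\le\frac{K}{L_p}\sum_i g_i^2$ for every $\boldsymbol{\alpha}_p$) is valid and cleaner than a case-by-case construction; it buys a non-constructive but fully general upper bound on $J_p^\star$ and correctly delivers \eqref{optim2} and \eqref{eq:phase_ind} in the regime the theorem claims them.

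One caution. As you yourself observe, your averaging converse nowhere uses $L_p\le 3$, so your proof asserts tightness for all $L_p\le K$ --- strictly more than the theorem states, and more than the surrounding text of the paper assumes (the paper explicitly treats $3<L_p\le K$ as a regime where only an upper bound on the BrEP, not optimality, is available). Proving more than the cited result is a classic warning sign, so you should reconcile this explicitly: either (a) the restriction to $L_p\le3$ in \cite{mainACN} reflects a stronger claim there (e.g., uniqueness of the optimizer, or an explicit worst-case $\boldsymbol{\psi}_p$, or strict suboptimality of every $\boldsymbol{\alpha}_p\neq\tilde{\boldsymbol{\alpha}}_p$) that your argument does not deliver, or (b) the stated theorem is simply conservative and your argument strengthens it. Your closing paragraph speculates in favor of (b) but does not verify it against the reference; since \eqref{eq:phase_ind} is written as an equality of the $\arg\sup$, which ordinarily connotes uniqueness that a zero-mean averaging argument cannot establish, this reconciliation is the one substantive loose end in an otherwise correct proof.
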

\begin{proof}
	See \cite[Appendix]{mainACN}.
	\end{proof}

Following the results stated in Theorem~\ref{Th1}, we can deduce that the optimum of the objective function  defined in~\eqref{obj1} $J^\star(\phi, \mathbf{S})$ is bounded as
\begin{align} \label{eq:LB}
J^\star(\phi, \mathbf{S})&\geq K\sum_{p=0}^{P-1}\frac{1}{L_p}\sum_{l=0}^{L-1}s_{l,p}|g_{l}(\phi)|^2, &L_p \leq K ,\forall p.
\end{align}
The bound is achievable, when selecting a phase slopes vector 
$\tilde{\boldsymbol{\alpha}}(\mathbf{S})= 
[\tilde{\boldsymbol{\alpha}}_{ 0}^\textsf{T}(\mathbf{S}), \tilde{\boldsymbol{\alpha}}_{ 1}^\textsf{T}(\mathbf{S}), \ldots, \tilde{\boldsymbol{\alpha}}_{ P-1}^\textsf{T}(\mathbf{S})]^\textsf{T}$. That is, when $L_p \leq K$, $\forall p$,
\begin{align}\label{eq:J_alpha_tilde}
	\inf_{\boldsymbol{\psi}\in[0,2\pi)^{L}} J(\phi, \mathbf{S}, \boldsymbol{\psi}, \tilde{\boldsymbol{\alpha}} )&=K\sum_{p=0}^{P-1}\frac{1}{L_p}\sum_{l=0}^{L-1}s_{l,p}|g_{l}(\phi)|^2 \\
	&=K\overline{ G}(\phi,\mathbf{S}),
\end{align}
where 
\begin{align} \label{G}
\overline{ G}(\phi,\mathbf{S})&=\sum_{p=0}^{P-1} \frac{1}{L_p}\sum_{l=0}^{L-1}s_{l,p}|g_{l}(\phi)|^2, \quad \mathbf{S}\in \mathcal{S}.
\end{align}
We will refer to $\overline{ G}(\phi,\mathbf{S})$ as the equivalent radiation pattern after the hybrid combiner.
 In the case $L_p\leq 3$, $\forall p$ the bound in~\eqref{eq:LB} is tight, thus
 \begin{align}
 \boldsymbol{\alpha}^\star(\phi,\mathbf{S})=	\tilde{\boldsymbol{\alpha}}(\mathbf{S}),  \quad L_p \leq K, L_p\leq 3, \forall p.
 \end{align}
Since the optimal phase slopes vectors are independent of $\phi$, we can conclude that the solution of~\eqref{subphases} is equivalent to the solution of~\eqref{phases}, i.e., $\boldsymbol{\alpha}^\star(\phi,\mathbf{S})=\boldsymbol{\alpha}^\star(\mathbf{S})$ for all $\phi$, including the worst-case AOA.

\subsection{Optimal Sub-connected Configuration}
 In~\eqref{config} we stated the problem of finding the optimal sub-connected configuration $\mathbf{S}^\star$ that corresponds to the optimal phase slopes $\boldsymbol{\alpha}^\star(\mathbf{S})$, defined according to~\eqref{phases}. In the past subsection we found the phase slopes $\tilde{\boldsymbol{\alpha}}(\mathbf{S})$ that ensures an upper bound on the BrEP for the configurations $\{\mathbf{S}\in \mathcal{S}: L_p\leq K, \forall p\}$. For the configurations in the subset $\{\mathbf{S}\in \mathcal{S}: L_p\leq K$ and $L_p\leq 3,\forall p \}$, the phase slopes are optimal, that is, $\tilde{\boldsymbol{\alpha}}(\mathbf{S})=\boldsymbol{\alpha}^\star(\mathbf{S})$. However, for the configurations $\{\mathbf{S}\in \mathcal{S}: \exists p, L_p>K\}$, no analytical solution is available for phase slopes that minimizes or achieves an upper bound on BrEP. Taking that into account, let us define a subset of $\mathcal{S}$, 
 \begin{align}
 	\St=\{\mathbf{S}\in \mathcal{S}: L_p\leq K, \forall p\} 
 \end{align}
 together with an optimization problem to find the optimal configuration for the hybrid combiner with an analog part that uses phase shifters with slopes $\tilde{\boldsymbol{\alpha}}(\mathbf{S})$, that is 
\begin{align}\label{def:obj:Stilde}
\tilde{\mathbf{S}}
&\triangleq\arg
\inf_{\mathbf{S}\in\St}
\sup_{\substack{\phi\in[0, 2\pi)\\ \boldsymbol{\psi}\in[0,
		2\pi)^{L}}}
P_{\mathrm{B}}(\phi, \mathbf{S},  \boldsymbol{\psi}, \tilde{\boldsymbol{\alpha}}(\mathbf{S})).
\end{align}
The solution to~\eqref{config} for the configurations $\{\mathbf{S}\in \mathcal{S}: L_p\leq K$ and $L_p\leq 3,\forall p \}$ is a special case of~\eqref{def:obj:Stilde}. We follow similar steps as used to derive~\eqref{obj1} to express~\eqref{def:obj:Stilde} as
\begin{align}
\tilde{\mathbf{S}}	&=\arg
\sup_{\mathbf{S}\in\St}
\inf_{\substack{\phi\in[0, 2\pi)\\ \boldsymbol{\psi}\in[0,
		2\pi)^{L}}}
J(\phi, \mathbf{S}, \boldsymbol{\psi}, \tilde{\boldsymbol{\alpha}} )\\
&=\arg
\sup_{\mathbf{S}\in\St}
\inf_{\phi\in[0, 2\pi)}
K\overline{ G}(\phi,\mathbf{S}), \label{S*} 
\end{align}
where~\eqref{S*} follows from~\eqref{eq:J_alpha_tilde}.
The optimal configuration $\tilde{\mathbf{S}}$ depends on the radiation patterns of the combined antennas. Given a set of antennas, we pick the configuration that satisfies~\eqref{S*}. That is the configuration that maximizes the sum of the SNRs from the $K$ packets for the worst-case AOA $\phi$. Since the BrEP is inversely proportional to the equivalent gain $\overline{ G}(\phi,\mathbf{S})$ for all $\phi$, the overall system performance can be assessed according to the gain of the worst-case AOA of the equivalent radiation pattern after hybrid combining, $\min_{\phi }\overline{ G}(\phi,\mathbf{S})$. 
We observe that $\tilde{\mathbf{S}}$ maximizes the bound on $J^\star(\phi,\mathbf{S})$ in~\eqref{eq:LB}. Hence, the hybrid combiner with the configuration $\tilde{\mathbf{S}}$ gives the best upper bound on the BrEP at the worst-case AOA. We identify two special cases of~\eqref{def:obj:Stilde}, 
\begin{itemize}
	\item If $\mathcal{S}=\St=\{\mathbf{S}\in \mathcal{S}: L_p\leq K$ and $L_p\leq 3,\forall p \}$, then the solution to~\eqref{eq:3} is given by $(\mathbf{S}^\star, \boldsymbol{\alpha}^\star)=(\tilde{\mathbf{S}}, \tilde{\boldsymbol{\alpha}})$. That is the optimal sub-connected analog combiner that minimizes the BrEP for the worst AOA when MRC is used for digital combining.
	\item If $\St=\emptyset $ then we need to solve~\eqref{eq:3} numerically to attempt to find phase slopes and configuration that minimizes or achieves an upper bound on the BrEP.
\end{itemize}
Assuming that $\St\neq\emptyset $, we expect high performance, i.e., a high gain at the worst-case AOA of  the equivalent radiation pattern after the HC with an analog part defined by ($\tilde{\mathbf{S}}$, $\boldsymbol{\tilde{\alpha} }  (\mathbf{S})$) and MRC combining. To assess the limits of the hybrid combiner, a performance bound is derived in the following lemma. 
\begin{lemma}\label{lem:limits}
	Let  $\overline{ G}(\phi,\mathbf{S})$ be defined according to~\eqref{G} and let the average radiation of all antenna elements be the same in the azimuth plane, that is
	\begin{align}\label{lem1:ass}
	\frac{1}{2\pi}\int_{2\pi}|g_l(\phi)|^2 d\phi =G_{\mathrm{2}\pi}, \quad \forall l. 
	\end{align}
	Then, 
	\begin{equation}\label{lem:lim1}
	\inf_{\phi\in[0, 2\pi)}
	\overline{ G}(\phi,\tilde{\mathbf{S}}) \leq P G_{\mathrm{2}\pi},
	\end{equation}
	where $\tilde{\mathbf{S}}$ is defined according to~\eqref{def:obj:Stilde}.
	\begin{proof}
		See Appendix \ref{appendix_Lem1}.
	\end{proof}
\end{lemma}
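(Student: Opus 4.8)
The plan is to exploit the elementary fact that the infimum of any function over an interval cannot exceed its average over that interval. Applied to the equivalent radiation pattern, this gives
\begin{equation}
\inf_{\phi\in[0,2\pi)}\overline{G}(\phi,\tilde{\mathbf{S}})
\leq \frac{1}{2\pi}\int_{2\pi}\overline{G}(\phi,\tilde{\mathbf{S}})\,d\phi,
\end{equation}
so the whole problem reduces to evaluating the azimuthal average of $\overline{G}$. Note that this reduction works for \emph{any} $\mathbf{S}\in\mathcal{S}$, not only for the optimal $\tilde{\mathbf{S}}$; the optimality of $\tilde{\mathbf{S}}$ plays no role in establishing the bound, which is what makes $PG_{2\pi}$ a genuine ceiling on achievable worst-case performance.

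Next I would substitute the definition~\eqref{G} of $\overline{G}$ and interchange the (finite) sums with the integral, which is justified since everything is bounded and the sums are finite. This isolates the quantity $\frac{1}{2\pi}\int_{2\pi}|g_l(\phi)|^2\,d\phi$ for each antenna $l$, which by the hypothesis~\eqref{lem1:ass} equals the common constant $G_{2\pi}$ for every $l$. After pulling $G_{2\pi}$ out, the remaining inner sum over $l$ is $\sum_{l=0}^{L-1}s_{l,\,p}$.

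The final step uses the structure of the sub-connected configuration: since $\mathbf{S}\in\{0,1\}^{L\times P}$ with exactly $L_p$ ones in column $p$, we have $\sum_{l=0}^{L-1}s_{l,\,p}=L_p$, precisely cancelling the $1/L_p$ weight in front of each port term. Each of the $P$ ports therefore contributes exactly $G_{2\pi}$ to the average, and summing over $p$ yields $\frac{1}{2\pi}\int_{2\pi}\overline{G}(\phi,\tilde{\mathbf{S}})\,d\phi = PG_{2\pi}$. Combining this with the infimum-below-average inequality delivers~\eqref{lem:lim1}.

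There is essentially no hard obstacle here: the only point requiring any care is recognizing that the $1/L_p$ normalization exactly compensates the column sum $\sum_l s_{l,p}=L_p$, so that the equivalent pattern has the same azimuthal average $PG_{2\pi}$ regardless of how the antennas are grouped. The interpretation is the real content of the lemma---no grouping can raise the worst-case gain above the flat value $PG_{2\pi}$ one would get if the average pattern were perfectly omnidirectional---while the derivation itself is a one-line averaging argument.
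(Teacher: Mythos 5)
Your proposal is correct and follows essentially the same route as the paper's proof: both reduce the claim to the observation that the infimum cannot exceed the azimuthal average, compute that average to be $PG_{2\pi}$ using~\eqref{lem1:ass} together with the column-sum identity $\sum_{l}s_{l,p}=L_p$ cancelling the $1/L_p$ weights, and then specialize from arbitrary $\mathbf{S}\in\mathcal{S}$ to $\tilde{\mathbf{S}}$. No gaps.
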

An example of a scenario where~\eqref{lem1:ass} holds is the deployment of identical directional antennas pointing at different directions along the azimuth plane.
We can observe from the right-hand side of~\eqref{lem:lim1} that the bound on the performance is proportional both to the average gain of the antenna elements and the number of ports. However, it is  not dependent on the number of antennas $L$. 
In \secR~\ref{NR}, it will be shown through some examples, that the performance bound can be approached with limited number of directional antennas, which implies that the conditions associated with the use of $\tilde{\boldsymbol{\alpha}}$, ($\St\neq\emptyset $) do not set a limitation on the use of the HC.


\section{Hybrid Combing of Directional Antennas }\label{S4}
For a generic far field functions $g_l(\phi)$, it is hard to analytically solve the problem stated in~\eqref{S*}. This however, can be solved given some characterization of the far field function of antennas. In the following we assume the use of $L$ identical directional antennas elements that are evenly spread around the azimuth plane. We attempt to solve~\eqref{S*} analytically for the HC with analog phase shifters slopes $\tilde{\boldsymbol{\alpha}}(\mathbf{S})$ and a digital MRC receiver with $P=2$. Let $G(\phi)=|g(\phi)|^2$ be the radiation pattern of the antenna in the azimuth plane centered around $\phi=0$, where $g(\phi)$ is the far field function of the antenna. Then, the radiation pattern of antenna $l$ is given by
\begin{equation} \label{Gl}
	G_l(\phi)= G(\phi-l2\pi/L),\quad \phi \in [-\pi,\pi).
\end{equation}
We assume that the radiation pattern $G(\phi)$ is symmetric\footnote{The assumption is needed only to simplify the definition of the main lobe region and sidelobes region of the pattern. If the assumption is dropped, we need minor modifications on the definition of the main lobe and sidelobes regions of the pattern for the results of this section to hold.} around $\phi=0$ with $G(0)=\max_{\phi} G(\phi)$. 
Let  $G_{\mathrm{SL}}$ be the gain of the highest sidelobes such that the sidelobes level (SLL) is obtained as $10\log_{10} \big(G_{\mathrm{SL}}/G(0)\big) $. We define $\fiB$ as
\begin{align} \label{phi_B}
	\fiB=\min \{\phi \in [0,\pi): G(\phi)=G_{\mathrm{SL}} \}.
\end{align}
Following this, $G(\phi)$ can be lower bounded as
\begin{align} \label{B_ml}
	&G(\phi) > G_{\mathrm{SL}}, \quad \phi \in (-\fiB,\fiB).
\end{align}
We refer to this region of the pattern where  $\phi \in (-\fiB,\fiB)$ as the main lobe of the pattern.  
The remaining region is referred to as the sidelobes region of the pattern, in which $G(\phi)$ is upper bounded as
 \begin{align}\label{B_sl}
 	G(\phi) \leq G_{\mathrm{SL}}, \quad \phi \in [-\pi, -\fiB]\cup [\fiB,\pi).
 \end{align} 
  Also, we refer to $\fiB$  as the break point of the radiation pattern, as it separates the main lobe region of the pattern from the sidelobes region. 
  Due to the directional behavior of antennas, we assume that\footnote{This upper bound is used to simplify the proof of Theorem~\ref{th2}.}  $\fiB<\pi/2$. 
  Then, we attempt to find the configuration $\tilde{\mathbf{S}}$ 
that maximizes the gain at the worst-case AOA of the effective radiation pattern after the analog-digital combining, that is  $\overline{ G}(\phi,\mathbf{S})$. We consider sub-connected configurations that group adjacent antennas together. We group $L_0$ consecutive antennas together and the remaining consecutive $L_1=L-L_0$ antennas together. In such setting $\mathbf{S}$ can be determined from $L_0$ and therefore we write, with some abuse of notation, $\overline{G}(\phi,L_0)$ instead of $\overline{ G}(\phi,\mathbf{S})$.
 Without loss of generality, we let  $L/2\leq L_0\leq L-1$ and given $P=2$, the equivalent radiation pattern obtained from~\eqref{G} is 
\begin{equation}\label{G_eq}
\begin{split}
	\overline{ G}(\phi,L_0)
	&=\sum_{l=0}^{L_0-1}\frac{1}{L_0}G_l(\phi)+\sum_{l=L_0}^{L-1}\frac{1}{L-L_0}G_l(\phi).
\end{split}
\end{equation}
 Due to the adjacent grouping of antennas, $\mathbf{S} \in \St$ is equivalent to $L_0\in \{l \in \mathbb{N}: L/2\leq l\leq L-1$ and $l\leq K\}$. Thus,~\eqref{S*} can be expressed only in terms of $L_0$ as
\begin{equation}\label{L1_opt}
	\tilde{L}_0=\arg\sup_{\substack{\lceil L/2 \rceil\leq L_0\leq L-1\\ \quad ~~L_0\leq K}}
	\quad \inf_{\phi\in[0, 2\pi)}
	\overline{ G}(\phi,L_0) , \quad \tilde{L}_1=L-\tilde{L}_0.
\end{equation}
 Note that $(\tilde{\mathbf{S}})^\textsf{T}\tilde{\mathbf{S}}=\diag (\tilde{L}_0,L-\tilde{L}_0)$, according to~\eqref{S}.
The solution to a relaxed version of this problem is formulated in the following theorem. 
\begin{theorem} \label{th2} 
    Suppose $P=2$. Consider $L$ antennas with radiation patterns $G_l(\phi) = G(\phi- 2\pi/L)$ for $l=0, 1, \ldots, L-1$, where $G(\phi)$ is such that $G(\phi)=G(-\phi)$, $G(0)=\sup_{\phi} G(\phi)$, and has break point $\fiB<\pi/2$, where $\fiB$ is defined in~\eqref{phi_B}. 
    If the gain of the highest sidelobe $G_{\mathrm{SL}}$ of $G(\phi)$ satisfies 
	\begin{align} \label{condition}
    G_{\mathrm{SL}}< \frac{\floor{L/2}}{\ceil{L/2}}\inf_{\phi \in  \mathcal{J}}	\frac{1}{L-1}\sum_{l=0}^{L-2}G_l(\phi),
	\end{align}
	where $\mathcal{J}=[\fiB-\frac{2\pi}{L}, (L-1)\frac{2\pi}{L}-\fiB]$, then (i) $\fiB> \pi/L$ and (ii) the solution 
	to~\eqref{L1_opt}, with the constraint $L_0\leq K$ relaxed, is
    \begin{equation}\label{L1*}
        \tilde{L}_0= \ceil{L/2}, \quad \tilde{L}_1 =\floor {L/2} .
    \end{equation}

\begin{proof}
	See Appendix \ref{appendix_Th2}.
\end{proof}
\end{theorem}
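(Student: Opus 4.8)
The plan is to prove the two claims in sequence: first the geometric fact (i), and then to use it to locate the worst-case AOA and settle the optimization in (ii).

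For (i) I would argue by contradiction. Suppose $\fiB\leq\pi/L$ and evaluate the average gain on the right-hand side of~\eqref{condition} at the single direction $\phi=\pi/L$, the bisector of the broadside directions of antennas $0$ and $1$. One checks directly that $\pi/L\in\mathcal{J}$ for every $L\geq2$ (using only $\fiB<\pi/2$). Moreover each $l\in\{0,1,\ldots,L-2\}$ has its broadside at angular distance at least $\pi/L\geq\fiB$ from $\pi/L$, so by~\eqref{B_sl} every term obeys $G_l(\pi/L)\leq G_{\mathrm{SL}}$ and hence $\frac{1}{L-1}\sum_{l=0}^{L-2}G_l(\pi/L)\leq G_{\mathrm{SL}}$. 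This makes the right-hand side of~\eqref{condition} at most $\frac{\floor{L/2}}{\ceil{L/2}}G_{\mathrm{SL}}\leq G_{\mathrm{SL}}$, contradicting~\eqref{condition}; therefore $\fiB>\pi/L$.

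The payoff of (i) is that, since neighbouring broadside directions are separated by $2\pi/L<2\fiB$, the main lobes of adjacent antennas overlap: every azimuth lying inside the angular span of a group falls in the main lobe of at least one of that group's antennas, and the two groups' spans together cover the whole circle with overlap at the two inter-group seams. Consequently the group average $\frac{1}{L_j}\sum_{l\in\text{group }j}G_l(\phi)$ is held above the sidelobe floor throughout the group and attains its smallest main-lobe-supported value at the midpoint of an interior gap, where exactly two neighbouring lobes contribute. To solve~\eqref{L1_opt} I would classify the candidate worst-case angles of $\overline{G}(\phi,L_0)$ in~\eqref{G_eq}: interior gaps of the larger group (both bracketing contributions carry the weight $1/L_0$), interior gaps of the smaller group (weight $1/(L-L_0)$), the two inter-group seams (one contribution of each weight), and the broadside directions (which carry the peak $G(0)$). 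Because $L_0\geq\ceil{L/2}\geq L-L_0$, the smallest such value is produced at an interior gap of the larger group, and condition~\eqref{condition}---which pins $G_{\mathrm{SL}}$ below the bulk average main-lobe gain over $\mathcal{J}$ scaled by $\floor{L/2}/\ceil{L/2}$---is exactly the slack needed to ensure the residual sidelobe cross-contributions cannot reorder these candidates. This localizes $\inf_{\phi}\overline{G}(\phi,L_0)$ at an interior gap of the larger group.

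It then remains to show this localized worst-case value is monotonically decreasing in $L_0$ over $\ceil{L/2}\leq L_0\leq L-1$: the two bracketing main-lobe contributions there are both weighted by $1/L_0$, so enlarging $L_0$ lowers them, while~\eqref{condition} keeps the accompanying sidelobe corrections uniformly small so the minimum can only drop. Hence the supremum over $L_0$ in~\eqref{L1_opt}, with the constraint $L_0\leq K$ relaxed, is attained at the smallest admissible $L_0=\ceil{L/2}$, giving $\tilde{L}_0=\ceil{L/2}$ and $\tilde{L}_1=\floor{L/2}$. I expect the main obstacle to be the localization step of the previous paragraph: showing uniformly in $\phi$, and while carrying all sidelobe terms, that the global minimum genuinely sits at an interior gap of the larger group rather than at an inter-group seam or in a group's edge/overlap region---this uniform bookkeeping, bounded through the average over $\mathcal{J}$, is precisely what condition~\eqref{condition} is engineered to make work.
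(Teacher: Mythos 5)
Your proof of claim (i) is correct and is essentially the paper's own argument: the paper likewise derives a contradiction from $\fiB\le\pi/L$ by exhibiting a nonempty set of angles (there, the interval $\Iz(0)$; in your case, the single point $\pi/L$) on which every term $G_l(\phi)$, $l=0,\dots,L-2$, is a sidelobe contribution bounded by $G_{\mathrm{SL}}$, which forces the right-hand side of~\eqref{condition} down to at most $G_{\mathrm{SL}}$.

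For claim (ii), however, there are two genuine gaps. First, your localization step reduces the search for $\inf_\phi\overline{G}(\phi,L_0)$ to a finite list of ``candidate worst-case angles'' (midpoints of interior gaps, the two seams, the broadsides). Nothing in the hypotheses licenses this: $G$ is only assumed symmetric, maximized at $0$, and below $G_{\mathrm{SL}}$ outside $(-\fiB,\fiB)$ --- the main lobe may ripple, so the minimum of a weighted sum of overlapping lobes need not sit at a gap midpoint, and the infimum can a priori occur anywhere. The paper's corresponding step (Lemma~\ref{Lem_inf}) makes a much weaker, but provable, claim: the infimum of $\overline{G}(\cdot,L_0)$ lies somewhere in the interval $\Iz(L_0)$, the angular span on which the second group contributes only sidelobes; proving even this requires decomposing the complement into the three subintervals $\Izo(L_0)$, $\Io(L_0)$, $\Ioz(L_0)$ and ruling each out by a separate contradiction built on shift identities for $f_N(\phi)=\sum_{l=N}^{L-1}G_l(\phi)$ (properties~\ref{Lem_f:p5}--\ref{Lem_f:p7} of Lemma~\ref{Lem_f}). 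Second, your final step asserts that the localized worst-case value is monotone decreasing in $L_0$ because ``enlarging $L_0$ lowers'' the dominant weights; but the location of the infimum moves with $L_0$, and comparing infima taken over different, moving regions is exactly the delicate point. The paper avoids monotonicity altogether: it compares each $L_0>\ceil{L/2}$ directly with $\ceil{L/2}$ by showing the pointwise difference $\overline{G}(\phi,\ceil{L/2})-\overline{G}(\phi,L_0)$ is strictly positive on $\Iz(L_0)$ (this is where~\eqref{condition} enters, via properties~\ref{Lem_f:p3} and~\ref{Lem_f:p4}), and then separately proves $\inf_{\phi\in[0,2\pi)}\overline{G}(\phi,\ceil{L/2})=\inf_{\phi\in\Iz(L_0)}\overline{G}(\phi,\ceil{L/2})$, treating even and odd $L$ differently. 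You correctly identify the localization as ``the main obstacle,'' but expressing confidence that~\eqref{condition} is engineered to make it work is not a proof; the bookkeeping you defer is the bulk of the theorem.
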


\begin{corollary}\label{cor:1}
Given the assumptions in Theorem~\ref{th2}, if 
\begin{equation}\label{c_c}
G_{\mathrm{SL}}< \frac{\floor{L/2}}{\ceil{L/2}}\inf_{\phi \in  [0,2\pi/L)}	\frac{1}{L}\sum_{l=0}^{L-1}G_l(\phi)
\end{equation} 
is satisfied, then~\eqref{condition} holds. 
\begin{proof}
	See Appendix \ref{appendix_Cor1}.
\end{proof}
\end{corollary}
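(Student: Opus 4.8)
The plan is to show that the simpler hypothesis~\eqref{c_c} implies the hypothesis~\eqref{condition} of Theorem~\ref{th2}; the corollary is then immediate, since \eqref{condition} is exactly what Theorem~\ref{th2} requires. Both \eqref{c_c} and \eqref{condition} have the form $G_{\mathrm{SL}}<\frac{\floor{L/2}}{\ceil{L/2}}X$ with the same positive prefactor, so it suffices to compare the two averaged patterns and prove
\[
\inf_{\phi\in[0,2\pi/L)}\frac{1}{L}\sum_{l=0}^{L-1}G_l(\phi)\;\le\;\inf_{\phi\in\mathcal{J}}\frac{1}{L-1}\sum_{l=0}^{L-2}G_l(\phi).
\]
Call the left side $A$ and the right side $B$, and keep in mind the identity $\sum_{l=0}^{L-1}G_l(\phi)=\sum_{l=0}^{L-2}G_l(\phi)+G_{L-1}(\phi)$ that ties the two sums together. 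I would first record a periodicity fact: since $G_l(\phi)=G(\phi-2\pi l/L)$ and the offsets $\{2\pi l/L\}_l$ form a complete residue system modulo $2\pi$, a shift $\phi\mapsto\phi+2\pi/L$ merely permutes the summands of $\sum_l G_l$. Hence this full sum is $2\pi/L$-periodic, its infimum over $[0,2\pi/L)$ equals its infimum over the whole azimuth circle, and therefore $A=\tfrac1L\inf_{\phi}\sum_{l=0}^{L-1}G_l(\phi)$.

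The geometric heart of the argument is to locate the \emph{dropped} antenna relative to $\mathcal{J}$. Let $\phi^\star$ attain $B$ on the compact interval $\mathcal{J}$ (it exists by continuity of the patterns). The two endpoints $\fiB-2\pi/L$ and $(L-1)2\pi/L-\fiB$ of $\mathcal{J}$ are precisely the upper and lower edges of the main lobe of antenna $L-1$, whose boresight sits at $2\pi(L-1)/L\equiv-2\pi/L$; thus $\mathcal{J}$ and the main lobe of antenna $L-1$ are complementary arcs, and $\mathcal{J}$ is exactly the sidelobe region of that antenna. By the sidelobe bound~\eqref{B_sl} this gives $G_{L-1}(\phi^\star)\le G_{\mathrm{SL}}$.

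It then remains to combine these facts arithmetically. From \eqref{c_c} and $\floor{L/2}\le\ceil{L/2}$ we have $G_{\mathrm{SL}}<A\le\tfrac1L\sum_{l=0}^{L-1}G_l(\phi^\star)$. Inserting $G_{L-1}(\phi^\star)\le G_{\mathrm{SL}}$ and the identity above, this says $G_{L-1}(\phi^\star)<\tfrac1L\big(\sum_{l=0}^{L-2}G_l(\phi^\star)+G_{L-1}(\phi^\star)\big)$, which rearranges to $(L-1)G_{L-1}(\phi^\star)<\sum_{l=0}^{L-2}G_l(\phi^\star)$. Feeding this back into $\tfrac1L\sum_{l=0}^{L-1}G_l(\phi^\star)=\tfrac1L\big(\sum_{l=0}^{L-2}G_l(\phi^\star)+G_{L-1}(\phi^\star)\big)$ yields $\tfrac1L\sum_{l=0}^{L-1}G_l(\phi^\star)<\tfrac1{L-1}\sum_{l=0}^{L-2}G_l(\phi^\star)=B$, and since $A\le\tfrac1L\sum_{l=0}^{L-1}G_l(\phi^\star)$, we conclude $A\le B$. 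Multiplying by the common prefactor, \eqref{c_c} then gives $G_{\mathrm{SL}}<\frac{\floor{L/2}}{\ceil{L/2}}A\le\frac{\floor{L/2}}{\ceil{L/2}}B$, i.e.\ \eqref{condition}.

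The step I expect to be the main obstacle is ensuring the comparison is carried out \emph{without circularity}: the hypothesis \eqref{c_c} bounds $G_{\mathrm{SL}}$ only by the $L$-antenna floor $A$, whereas the target \eqref{condition} bounds it by the $(L-1)$-antenna floor $B$, and a naive attempt to show $A\le B$ by ``$G_{\mathrm{SL}}\le B$'' would assume what is being proved. The device that unblocks this is that the $L$-antenna sum literally contains the term $G_{L-1}$, so after using $G_{L-1}(\phi^\star)\le G_{\mathrm{SL}}<A\le\tfrac1L\sum_{l=0}^{L-1}G_l(\phi^\star)$ one can algebraically isolate $G_{L-1}(\phi^\star)$ and pass from the $L$-average to the $(L-1)$-average at $\phi^\star$; the supporting geometric claim that $\mathcal{J}$ is exactly the sidelobe region of antenna $L-1$ is what makes $G_{L-1}(\phi^\star)\le G_{\mathrm{SL}}$ available. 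Note that this argument uses only the structural assumptions of Theorem~\ref{th2} and not its conclusion $\fiB>\pi/L$.
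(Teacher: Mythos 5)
Your proof is correct and follows essentially the same route as the paper's Appendix C: both rest on the $2\pi/L$-periodicity of the full sum $\sum_{l=0}^{L-1}G_l(\phi)$ (so the infimum over $[0,2\pi/L)$ controls its value on all of $\mathcal{J}$), the observation that $\mathcal{J}$ is exactly the sidelobe region of antenna $L-1$ so that $G_{L-1}(\phi)\le G_{\mathrm{SL}}$ there, and the fact that $\floor{L/2}/\ceil{L/2}\le 1$ absorbs the dropped term. The only cosmetic difference is that you package the argument as the comparison $A\le B$ evaluated at a minimizer $\phi^\star$, whereas the paper runs the same inequality chain pointwise over $\Iz(L-1)=\mathcal{J}$.
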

Theorem~\ref{th2} gives the solution to a relaxed version of~\eqref{L1_opt}, under the fulfillment of~\eqref{condition}.
The condition requires a computation of only one infimum, which is simpler than solving the problem stated in~\eqref{L1_opt}.
 Condition~\eqref{c_c} presented in Corollary~\ref{cor:1} is stricter than~\eqref{condition}, but requires a computation of the infimum over a simpler interval. 
  Both~\eqref{condition} and~\eqref{c_c}  are sufficient but not necessary conditions for the optimality of~\eqref{L1*},  and therefore if neither of them holds then~\eqref{L1*} may be still the optimal solution of the relaxed version of~\eqref{L1_opt}, which can be solved numerically in that case.
Another result of the theorem is that~\eqref{condition} implies $\fiB> \pi/L$. Since we are interested in a reliable system, antennas with $\fiB\leq \pi/L$ are not a suitable choice in the first place. The main lobes of such antennas do not intersect, which results in low performance in the directions covered by only sidelobes, in spite of the chosen configuration.
 Hence we can infer that~\eqref{condition} omits the choice of such antennas that lead to low performance.
 
Given that~\eqref{condition} holds and assuming that $\ceil{L/2}\leq K$, we can conclude that the configuration that has grouping of antennas according to~\eqref{L1*} is feasible in~\eqref{L1_opt} and thus optimal. The HC with  analog part given by $(\tilde{\mathbf{S}},\tilde{\boldsymbol{\alpha}}(\mathbf{S}))$, where $\tilde{\mathbf{S}}$ has $\tilde{L}_0=\lceil L/2 \rceil$, achieves the best lower bound (with respects to bounds achieved using other configurations) on the gain of the equivalent radiation pattern at the worst-case AOA. 
As done in the previous section, special cases of~\eqref{L1_opt} are identified as follows. 
\begin{itemize}
		\item If $L-1 \leq K$ and $L-1 \leq 3$, then $\mathcal{S}=\St=\{\mathbf{S}\in \mathcal{S}: L_p\leq K$ and $L_p\leq 3,\forall p \}$, where $\mathcal{S}$ is restricted to adjacent configurations. Therefore, the solution to~\eqref{eq:3} is given by $\mathbf{S}^\star$ with $L_0^\star=\tilde{L}_0=\ceil{L/2}$, and $ \boldsymbol{\alpha}^\star=\tilde{\boldsymbol{\alpha}}(\mathbf{S})$.
        This minimizes the BrEP for the worst-case AOA among the hybrid combiners with adjacent configurations. 
       \item If $\ceil{L/2}> K$, then $\St=\emptyset$. Thus, ~\eqref{eq:3} needs to be solved numerically. 
\end{itemize} 
Assuming that $\ceil{L/2}\leq K $, (thus excluding the case when $\ceil{L/2} > K$) and given that the analog combiner is defined by the configuration $\tilde{\mathbf{S}}$ that has $(\ceil{L/2}, \floor{L/2}) $ and the phase slopes $\tilde{\boldsymbol{\alpha}}(\mathbf{S})$, we use Lemma~\ref{lem:limits} to deduce that 
	\begin{equation}
	\inf_{\phi\in[0, 2\pi)}
	\overline{ G}(\phi,\tilde{L}_0) \leq 2G_{\mathrm{2}\pi}.
	\end{equation}
This is the performance bound when deploying the HC with $P=2$ and $L$ identical directional antennas with equidistant angular separation. The bound is the same for both adjacent and non-adjacent configurations. We make the observation that when $L$ is even, the optimal adjacent configuration with $\tilde{L}_0=\lceil L/2 \rceil=L/2$ has the same equivalent radiation pattern $\overline{G}(\phi,L/2)=L/2\sum_{l=0}^{L-1}G_l(\phi)$ as non-adjacent configurations with $L_0=L/2$. Moreover, we will show later in the numerical results section that performance bound can be approached when $L$ is even, with configuration that has an adjacent setting. Therefore adjacent grouping of antennas is not suboptimal. From another aspect, we will illustrate that only a limited number of antennas is needed to approach this bound. Thus, the assumption $\ceil{L/2}\leq K $ does not limit the use of the designed HC. 
\section{Numerical Results}\label{NR}
\subsection{Testing $\eqref{condition}$ and $\eqref{c_c}$ with Patterns of Varying SLL }\label{VA}
\begin{table}[t]
		\caption{\textsc{Normalized far field function of analytical directional radiation patterns\cite[Table~2]{adp18}}}
	\centering
	\begin{tabular} {||c|c|c|c||}
		\hline
		Pattern & Far field function: $g(\phi)$  & $a$ & SLL [dB]\\  
		\hline
		ADP1& $\frac{\sin(a \phi)}{a\phi}$ &$a=\frac{\pi}{\fiN}$ &$-13.2$ \\[3ex] 
		\hline
		ADP2 & $(\frac{\pi}{2})^2\frac{\cos(a\phi)}{(\frac{\pi}{2})^2-(a\phi)^2}$ &$a=\frac{3\pi}{2 \fiN}$ &$-23$\\[3ex] 
		\hline
		ADP3&$\frac{\pi^2}{a\phi}\frac{\sin(a \phi)}{(\pi)^2-(a \phi)^2}$&$a=\frac{2\pi}{\fiN}$ &$-31.5$\\[3ex] 
		\hline
	\end{tabular}
	\label{T1}
\end{table}
In this section we would like to assess if conditions~\eqref{condition} and~\eqref{c_c} presented in Theorem~\ref{th2} and Corollary~\ref{cor:1}, respectively,  can be easily met when using some standard analytical directional patterns (ADPs). We keep the same settings as in \secR~\ref{S4}. We employ $L$ identical directional antennas that have equidistant angular separation in the azimuth plane satisfying~\eqref{Gl}. We test the system when using three ADPs that are characterized by different SLLs. These patterns correspond to the far field radiation of an aperture antenna fed by different shapes of field distribution\cite{adp18}. The far field functions of the ADPs are given in \tab~\ref{T1}. They are normalized with respect to their maxima and expressed as a function of $\fiN$, where this is defined as the first null of the pattern, $\fiN=\min  \{ \phi \in [0,\pi) : G(\phi)=0 \}$. The break point defined in~\eqref{phi_B} satisfies $\fiB<\fiN$. The three ADPs with $\fiN=65 $\,$\deg$ are shown in \fig~\ref{ADP0}.

Given $L$ and the antenna pattern characteristics defined by both the choice of ADP and $\fiN$, we compute the following parameters
\begin{align}
	U_{1} = \frac{\floor{L/2}}{\ceil{L/2}}
	&\quad\inf_{\phi \in \mathcal{J}}\quad	\frac{1}{L-1}\sum_{l=0}^{L-2}G_l(\phi), \label{C1db}\\
	U_{2}= \frac{\floor{L/2}}{\ceil{L/2}} &\inf_{\phi \in  [0,2\pi/L)}	\frac{1}{L}\sum_{l=0}^{L-1}G_l(\phi).\label{C2db}
\end{align}
Based on these, we can conclude that~\eqref{condition} is satisfied for the considered system ($L$, ADP, and $\fiN$) if $G_{\mathrm{SL}} < U_1$ and~\eqref{c_c} is satisfied if $G_{\mathrm{SL}} < U_2$.

 In \fig~\ref{ADP1}, we have the results of computing $U_1$ for systems with different number of ADP$1$ antenna elements. We also plot a reference line of $G_{\mathrm{SL}}=-13.2 $\,dB, which corresponds to the SLL level of ADP1 because the pattern is normalized. The intersection points of the curves with the reference line indicates the ADP1 with the minimal $\fiN$ such that~\eqref{condition} is satisfied. 
 For $L=4$ all ADP1 antennas with $\fiN \ge 59 $\,$\deg$ satisfies~\eqref{condition}.  Then, with the increase of $L$, $U_1$ increases, hence more patterns satisfy ~\eqref{condition}, e.g., when $L=10$, all ADP1 antennas with $\fiN>28$\,$\deg$ satisfy the condition. 
\begin{figure}[t]
	\center
	\includegraphics[width=\columnwidth]{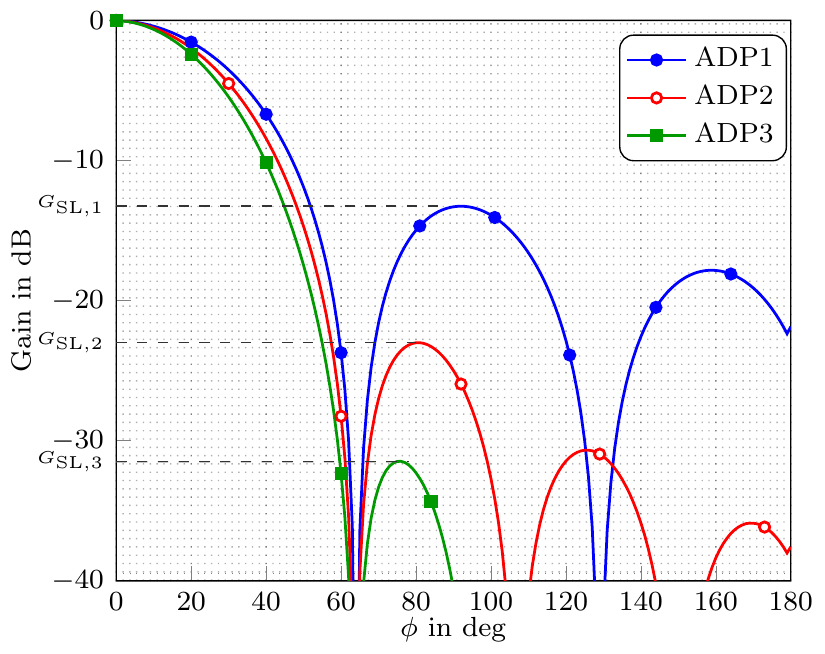}
	\caption{Normalized radiation patterns corresponding to ADPs $1$, $2$, and $3$, with $\fiN=65\deg $.}	
	\label{ADP0}
\end{figure}
We recall from \secR~\ref{S4}, that when~\eqref{condition} is satisfied, the optimal configuration for the HC with $P=2$, assuming $\ceil{L/2}\leq K$,  has $\Lzt=\ceil{L/2}$. To check if $\Lzt=\ceil{L/2}$, even when~\eqref{condition} is not satisfied, we need to solve~\eqref{L1_opt}. We do that numerically when deploying $L=4$ and $L=6$  antennas with ADP1. In \fig~\ref{ADPF} we plot the infimum of the equivalent radiation pattern $\inf_{\phi } \overline{G}(\phi,L_0)$ for different $L_0$ as a function of the first null of the pattern $\fiN$. We can see that the curve corresponding to $\inf_{\phi} \overline{G}(\phi,\ceil{L/2})$ is superior to the curves corresponding to $\inf_{\phi} \overline{G}(\phi,L_0), L_0 \neq \ceil{L/2}$, for both cases $L=4$ and $L=6$. These results can be verified to be valid for other systems with $L\le 10$ and they indicate that $\Lzt=\ceil{L/2}$ is optimal when using any ADP1 antenna with $\fiN\leq 90$, and $L\leq 10$. We highlight in the same figure, the regions where Theorem~\ref{th2} condition~\eqref{condition} applies for the case $L=4$ and $L=6$. While it can be easily checked that inside this region, the performance of the system is at least above $G_{\mathrm{SL}}$, we observe that outside it, the performance of most antennas system is low, since it is below $G_{\mathrm{SL}}$. 
Hence, when~\eqref{condition} does not apply, the antenna system is likely to have poor performance. This relates to claim (i) of Theorem~\ref{th2}. In \tab~\ref{T2}, we show the minimal $\fiN$ and the corresponding break point $\fiB$ for the three ADPs, such that~\eqref{condition} is fulfilled. We note that for all $L$, $\fiB>\pi/L$, which is consistent with claim~(i) of Theorem~\ref{th2}. As discussed in the previous section, this indicates that~\eqref{condition} cuts out the choice of antennas that are likely to lead to poor performance (e.g., antennas with $\fiB\leq \pi/L$). 

From another aspect, we can see in \tab~\ref{T2} that~\eqref{condition} is satisfied for $L=4$ elements of ADP2 and ADP3 with $\fiN\ge52 $\,$\deg$ and $\fiN\ge 50 $\,$\deg$, respectively. While for ADP1 it is satisfied when $\fiN\ge59$\,$\deg$. This indicates that the lower the SLL of the ADP is, the narrower the main lobe of the antenna pattern can be, such that~\eqref{condition} is satisfied. The same observation can be made for the other values of $L$.
\begin{figure}[t]
	\center
	\includegraphics[width=\columnwidth]{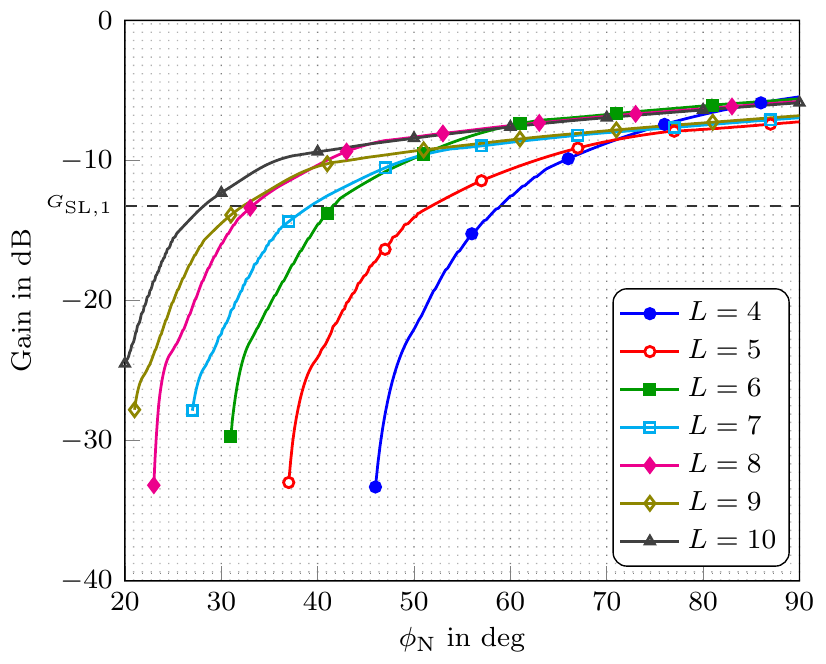}
	\caption{Plot of $U_1$ as a function of ADP1 first null, $\fiN$, for different $L$.}	
	\label{ADP1}
\end{figure}
\begin{figure}[t]
	\center
	\includegraphics[width=\columnwidth]{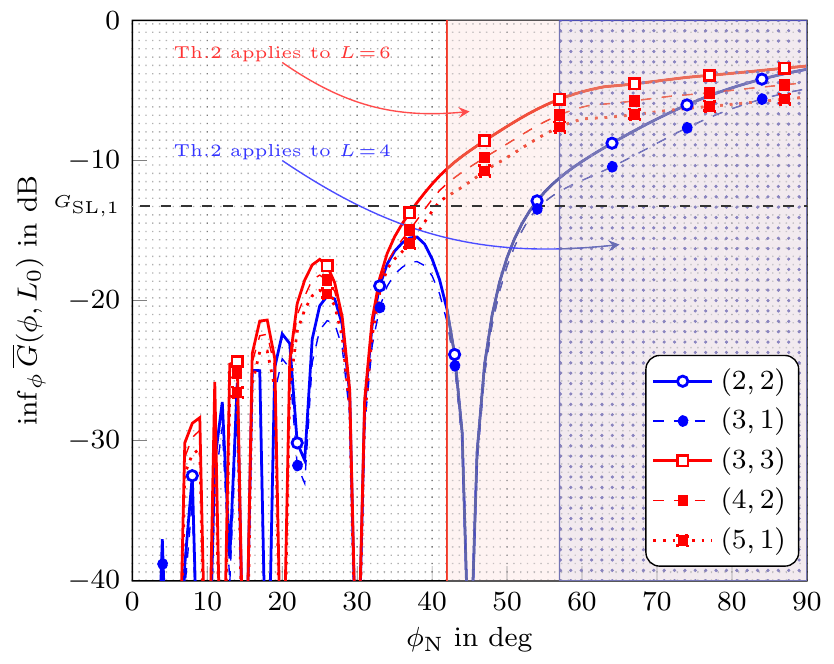}
	\caption{Curves of the gain of the equivalent radiation pattern at the worst-case AOA as a function of ADP1 first null $\fiN$ for different configurations $(L_0,L_1)$ with $L=4,6$.}	
	\label{ADPF}
\end{figure}
\begin{table}[t]
	\centering
	\caption{\textsc{Minimal $\fiN~(\deg)$ and $\fiB~(\deg)$  of ADPs 1, 2 and 3 satisfying~\eqref{condition}}}
	\begin{tabular} {c c|c|c|c|c|c|c|}
		\cline{3-8}
		& &     \multicolumn{2}{ |c| }{ADP1}&\multicolumn{2}{ |c| }{ADP2} &\multicolumn{2}{ |c| }{ADP3} \\
		\hline
		\multicolumn{1}{ ||c| }{$L$}& \multicolumn{1}{ |c| }{$180/L$ }&$\fiN $ & $\fiB$ & $\fiN$ & $\fiB $&$\fiN$ & $\fiB$ \\  
		\hline
		\multicolumn{1}{ ||c| }{$4$}& \multicolumn{1}{ |c| }{$45$}&	$59$  & $48 $  									& $52$  & $47 $  &  $50$  & $46$\\  
		\hline
		\multicolumn{1}{ ||c| }{$5$}&  \multicolumn{1}{ |c| }{$36$}&	$52$  & $42 $  									&$ 43$ &$39$ &  $41$  & $38$ \\
		\hline
		\multicolumn{1}{ ||c| }{$6$}& \multicolumn{1}{ |c| }{$30$}&  $42$  & $34 $																		& $ 36$ &$32$ &  $34$  & $31$  \\
		\hline
		\multicolumn{1}{ ||c| }{$7$}&  \multicolumn{1}{ |c| }{$25.7$}&    $40$  & $32 $																		& $ 32 $ &$28$  &  $30$  & $28$ \\
		\hline
		\multicolumn{1}{ ||c| }{$8$}& \multicolumn{1}{ |c| }{$22.5$}&    $34$  & $27 $																		& $ 28$ &$25$ &  $26$  & $24$ \\
		\hline
		\multicolumn{1}{ ||c| }{$9$}&  \multicolumn{1}{ |c| }{$20$}& $33$  & $27 $																		& $ 25$ &$23$ &  $23$  & $22$  \\
		\hline
		\multicolumn{1}{ ||c| }{$10$}&  \multicolumn{1}{ |c| }{$18$}&  $29$  & $23 $																		& $ 23$ &$20$  &  $21$  & $20$\\
		\hline
	\end{tabular}
	\label{T2}
\end{table}

To check the result of Corollary~\ref{cor:1}, we plot the curves corresponding to $U_1$  and $U_2$ in \fig~\ref{ADP2}. These are computed based on ADP2, for $L=4, 5$, and $6$. As expected, the curves corresponding to $U_2$  are below the curves corresponding to $U_1$,  in the region where~\eqref{condition} holds, since~\eqref{c_c} is more restrictive than~\eqref{condition} in that region. The gap between the two curves is minor, hence we can conclude that~\eqref{c_c} well approximates~\eqref{condition}.

The results shown in this subsection demonstrates that the conditions~\eqref{condition} of Theorem~\ref{th2} and~\eqref{c_c} of Corollary~\ref{cor:1}  are easily satisfied for the three analytical directional pattern models with varying SLL levels. Hence, we can conclude that the SLL required to satisfy the two conditions is within reasonable numbers. Also, we expect the results of Theorem~\ref{th2} to be valid for antennas that have similar pattern shape as the three ADPs. Moreover, if the conditions do not apply, then the antenna system is likely to have poor performance.
\begin{figure}[t]
	\center
	\includegraphics[width=\columnwidth]{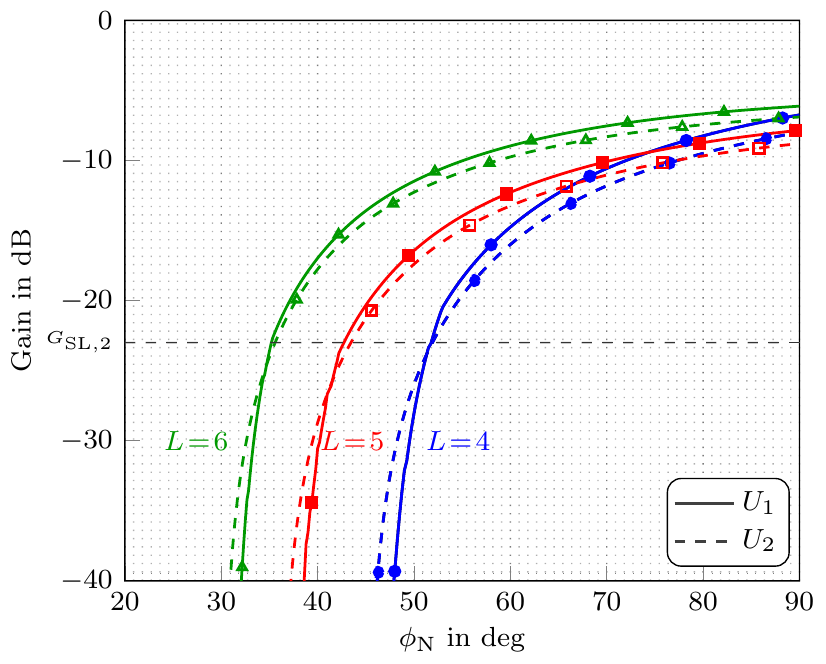}
	\caption{$U_1$ (solid) and $U_2$ (dashed) curves computed as a function of ADP2 first null $\fiN$, for $L=4, 5$, and $6$.}	
	\label{ADP2}
\end{figure}
\begin{figure}[t]
	\center
		\includegraphics[width=\columnwidth]{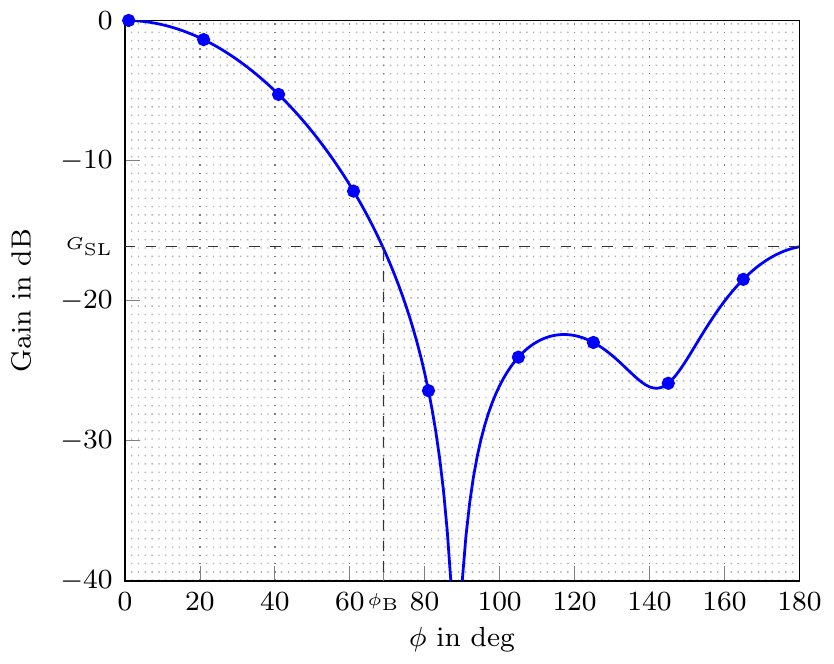}  
	\caption{Normalized radiation pattern of a patch directional antenna at frequency $f_{\mathrm{c}}=5.9$ GHz with length = $0.461\lambda/\sqrt{\epsilon_{\mathrm{r}}}$, width =  $0.512\lambda/\sqrt{\epsilon_{\mathrm{r}}}$ and height =  $\lambda/50$, where $\epsilon_{\mathrm{r}}=1.030$.}	
	\label{pattern_dB}
\end{figure}

\subsection{Performance of the Hybrid Combiner in V2V Scenario}\label{S5B}
We consider $K=10$ and we use that throughout this section. We recall that $K$ refers to the length of a burst of CAM packets, that are periodically broadcasted each $T$\,s. Given $L$ identical directional antennas and $P=2$, the hybrid combining is done using an analog combiner with phase shifters slopes $\tilde{\boldsymbol{\alpha}}$ as defined in \secR~\ref{S3}, followed by MRC. Since $K=10$, for $\St$ to be non-empty,  $\ceil{L/2}\leq K$. This implies that the maximum number of antennas that can be used is $\max(L)=20$. 

Our objective is to minimize the BrEP of the worst-case AOA for a burst of $K$ packets. As explained before, this can be stated as maximizing the effective gain of the worst-case AOA and therefore this will be used as our main metric in comparing the different configurations. 
Apart from the gain of the equivalent radiation pattern at the worst-case AOA, we would like to assess the omnidirectional characteristics of the resultant pattern. This is done based on the difference between the highest and lowest gain of the equivalent pattern in dB, 
\begin{align}
	R(L_0)= 10\log_{10} \frac{\sup_{\phi} G(\phi,L_0) }{\inf_{\phi} G(\phi,L_0)}.
\end{align}
 We use a patch antenna for our computations. These are low-profile antennas that are characterized by directional patterns. They are cheap to manufacture and can be integrated into multiple positions on the vehicle\cite[Ch. 4.1]{patch}. Hence, they are suitable for vehicular multiple antenna systems. The normalized analytical pattern of the patch antenna used is shown in \fig~\ref{pattern_dB}. The normalization is performed with respect to the maximum directive gain of  $9.21$\,dBi. The highest sidelobe gain of the normalized pattern is $G_{\mathrm{SL}}=-16.15$\,dB and the corresponding break point is $\phi_{\mathrm{B}}\approx 69$\,$\deg$.

We start by testing the performance of a system with $L=6$. The antennas are distributed around the azimuth plan with uniform separation of $ 2\pi/L$ according to~\eqref{Gl}. To know which configuration is optimal, we use the results of Theorem~\ref{th2} or Corollary~\ref{cor:1}. 
We check here both conditions~\eqref{condition} and~\eqref{c_c} by computing $U_1$ and $U_2$, which are given by~\eqref{C1db} and~\eqref{C2db}, respectively. We get that $U_1=-6.86$\,dB and $U_2= -7.63$\,dB. 
Since $G_{\mathrm{SL}}=-16.15$\,dB $< U_2 <U_1$, then we conclude that the optimal configuration has $(\Lzt,\tilde{L}_1)=(3,3)$.
It can be represented as $\tilde{\mathbf {S}}=[\mathbf{s}_0,\mathbf{s}_1]$ with
$\mathbf{s}_0=[1, 1, 1, 0, 0,0]^\textsf{T}$ and $\mathbf{s}_1=[0, 0, 0, 1, 1, 1]^\textsf{T}$.  The analog part of the HC is realized using
\begin{align*}
	\tilde{\boldsymbol{\alpha}}_0(\tilde{\mathbf {S}})=[\tilde{\alpha}_{0,0},\tilde{\alpha}_{1,0},\tilde{\alpha}_{2,0}]^\textsf{T}=[0, \frac{2\pi}{KT}, \frac{4\pi}{KT}]^\textsf{T},\\
	\tilde{\boldsymbol{\alpha}}_1(\tilde{\mathbf {S}})=[\tilde{\alpha}_{3,1},\tilde{\alpha}_{4,1},\tilde{\alpha}_{5,1}]^\textsf{T}=[0, \frac{2\pi}{KT}, \frac{4\pi}{KT}]^\textsf{T}.
\end{align*} 

In \fig~\ref{comparison_L6} the equivalent radiation pattern for the different configurations is shown.
Indeed, $(L_0,L_1)=(3,3)$
maximizes the gain at the worst-case AOA and achieves the best performance. Also, the resultant radiation pattern is characterized by good omnidirectional characteristics with $R(\Lzt)=0.51$\,dB. 
Generally, performance decreases as the difference between the number of antennas fed to the two ports $(L_0-L_1)$ increases. 
The equivalent gain of configurations with $L_0\neq L_1$ exhibits beam-forming effect in the directions of antennas scaled by $1/L_1$ (i.e., antennas with indices $l=L_0+1, \ldots, L-1$) which results in lower gain in the directions of the other group of antennas that is scaled by $1/L_0$. 
 Since we are considering broadcast communication between the vehicles, such directional effective radiation pattern is not beneficial. 
 Hence, having $L_0=\Lzt=L/2$ is the best way to divide the available gain evenly in all directions. 
In case $L$ is odd we cannot divide the antennas evenly among the two ports. We give an example using $L=5$. 
We have $U_1= -8.88$\,dB, $U_2= -9.72$\,dB, and $G_{\mathrm{SL}}=-16.15 $\,dB $< U_2 <U_1$. Thus, $\Lzt=3$ and the corresponding optimal scheme has $(\Lzt, \tilde{L}_1)=(3,2)$.  This configuration can be expressed in matrix form as $\tilde{\mathbf {S}}=[\mathbf{s}_0,\mathbf{s}_1]$ with
$\mathbf{s}_0=[1, 1, 1, 0, 0]^\textsf{T}$ and $\mathbf{s}_1=[0, 0, 0, 1, 1]^\textsf{T}$. The analog combiner is realized using phase shifters with slopes 
\begin{align*}
	\tilde{\boldsymbol{\alpha}}_0(\tilde{\mathbf {S}})&=[\tilde{\alpha}_{0,0},\tilde{\alpha}_{1,0},\tilde{\alpha}_{2,0}]^\textsf{T}=[0, \frac{2\pi}{KT}, \frac{4\pi}{KT}]^\textsf{T},\\
\tilde{\boldsymbol{\alpha}}_1(\tilde{\mathbf {S}})&=[\tilde{\alpha}_{3,1},\tilde{\alpha}_{4,1}]^\textsf{T}=[0, \frac{2\pi}{KT}]^\textsf{T}.
\end{align*}

The equivalent radiation pattern for configurations with $L=5$ is shown in \fig~\ref{comparison_L6}. It exhibits beam-forming effect and it has $R(\Lzt)=2.77$\,dB. Despite that the optimal configuration has superior performance than the configuration with $(L_0,L_1)=(4,1)$, nevertheless it does not result in good omnidirectional properties. 
Thus, for best omnidirectional characteristics $L$ should be even so that antennas can be grouped evenly among the two ports.  
\begin{figure}[t]
	\center
	\includegraphics[width=\columnwidth]{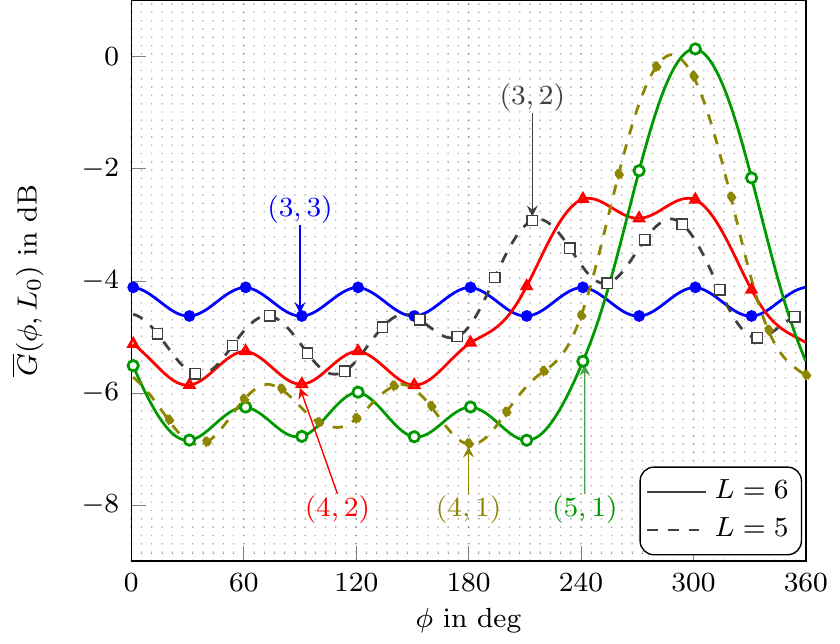}  
	\caption{Effective radiation pattern for hybrid combiner for different configurations $(L_0,L_1)$ with $L=5$, $6$ patch antennas and $K=10$.}	
	\label{comparison_L6}
\end{figure}
\begin{figure}[t]
	\center
	\includegraphics[width=\columnwidth]{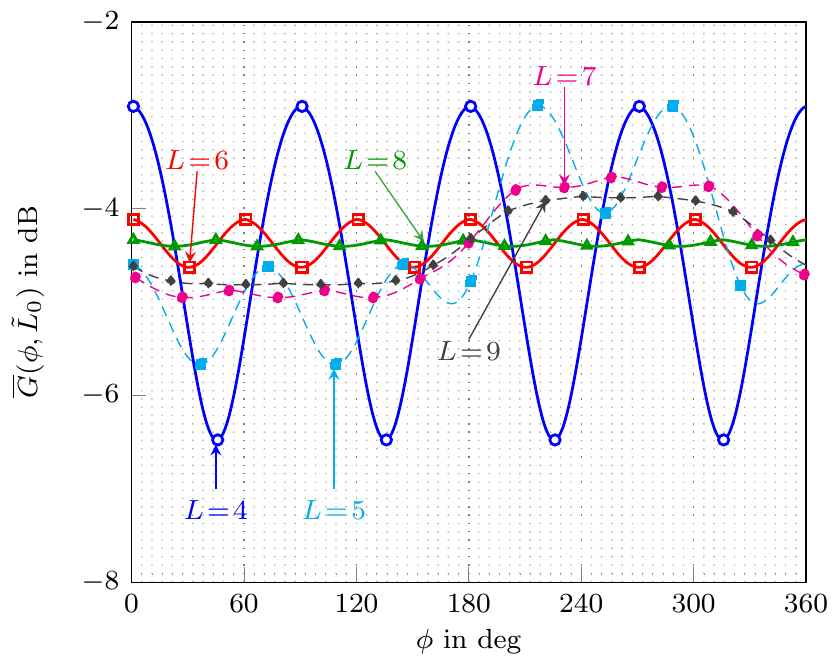}
	\caption{Effective radiation pattern for hybrid combiner with optimal configuration $(\ceil{L/2},\floor{L/2})$,  $L$ patch antennas and $K=10$.}	
	\label{omni}
\end{figure}

\subsection{Performance Bound}\label{VC}
	Given that the HC is used to combine the signals of antennas that have the same average gain in the azimuth plane $G_{\mathrm{2}\pi}$, Lemma~\ref{lem:limits} indicates that the performance of the system is upper bounded by $PG_{\mathrm{2}\pi}$. In the following, we would like to observe the implication of this lemma. 
	We use $L$ elements of the patch antenna deployed in Section~\ref{S5B} connected to the HC with $P=2$. The equivalent radiation patterns of the configurations $(\ceil{L/2},\floor{L/2})$ for $L=4,5,\ldots, 9$ are shown in \fig~\ref{omni}. These configurations are optimal as condition~\eqref{condition} is satisfied and $\ceil{L/2}\leq K=10$ for all of them. 
	From \fig~\ref{omni} we can notice that all the configurations with even $L$ are fluctuating around $2G_{\mathrm{2}\pi}=-4.37$\,dB. However,  increasing the number of antennas reduces the ripple (difference between maximum and minimum gain) of the equivalent radiation pattern, and yields better performance. We can observe that for $L=4$ there is a large ripple of approximately $3.6$\,dB. But starting from $L=6$ the ripple is significantly lower ($\approx 0.5$\,dB). Most importantly, the performance of the system with $L=8$ approaches very closely the performance upper limit of $2G_{\mathrm{2}\pi}=-4.37$\,dB and achieves good omnidirectional properties with $R(\Lzt)=0.07$\,dB. 
	Generally, we observe that performance is increased by adding an even number of antennas to the system. However,  the configurations with $L=7$ and $9$ have worse performance than the configuration with $L=6$. That is, performance can actually decrease by adding an odd number of antennas to a system with an even $L$. To be more precise, numerical results indicate that if $L$ is even and greater or equal to a certain threshold $L_{\mathrm{th}}$, then adding a single antenna (which results in an odd $L$) decreases performance. In \fig~\ref{omni}, the threshold is $L_{\mathrm{th}}=6$.
	
	To get more insight into the result of Lemma~\ref{lem:limits}, we plot the gain of the worst-case AOA of the equivalent radiation pattern with $L_0=\Lzt$ for different number of ADP2 antenna elements $L$ and different $\fiN$. We plot also the performance limit found in Lemma~\ref{lem:limits}. The results are shown in \fig~\ref{ADP2_PhiN_L}.  We can observe that performance saturates at the level $2G_{\mathrm{2}\pi}$ when the number of antennas is above certain threshold $L_{\mathrm{th}} $. We can identify that $L_{\mathrm{th}} = 6,~8$ and $12$, for the antennas with $\fiN=75$\,$\deg$, $50$\,$\deg$ and $35$\,$\deg$, respectively. We observe that antennas with wide main lobe (large $\fiN$) approaches the bound faster than antennas with narrow main lobe. 
	From another aspect, we can observe that configurations with even number of antennas faster approach the maximum possible gain of $2G_{\mathrm{2}\pi}$ at the worst-case AOA compared to configurations with odd number of antennas. 
	
	The results shown suggests that we need a relative small, even number of antennas to nearly reach the performance limit. Hence, the condition on the feasibility of HC that is $\ceil{L/2}\leq K$ can be easily met by the corresponding optimal adjacent scheme that has $\Lzt=\ceil{L/2}$. Also, another important fact is that our choice of an adjacent configuration in Section~\ref{S4} is not suboptimal, as the maximum performance can be achieved using the optimal adjacent configuration found in Theorem~\ref{th2}. Based on the results shown in \fig~\ref{ADP2_PhiN_L}, we can see that Lemma~\ref{lem:limits} gives the answers to the questions: Given directional antenna elements with particular $\fiB$ (or $\fiN$), how many antenna elements are needed for good performance? On the other hand, given $L$ antenna elements, how large should the main lobe of the antennas be, to reach good performance?

	\subsection{On Sidelobes Contribution to the Overall Performance } \label{SLvsML}
	We would like to assess the contribution of sidelobes to the overall system performance. To that end, let $G_{\mathrm{ML}}(\phi)$ be the main lobe of a directional antenna element,
	\begin{align}
	G_{\mathrm{ML}}(\phi)=
	\begin{cases}
	G(\phi),& \phi \in (-\fiB,\fiB),\\
	0,            & \phi \in [-\pi,-\fiB] \cup[\fiB,\pi ),
	\end{cases}
	\end{align} 
	and let $ \overline{ G}_{\mathrm{ML}}(\phi,L_0)$ be the equivalent radiation pattern computed based only on the main lobe contribution of antennas. That is, we substitute $G(\phi)$ by $G_{\mathrm{ML}}(\phi)$ in~\eqref{Gl} and~\eqref{G_eq}. Then, we compute $\inf_\phi \overline{ G}(\phi,L_0)$ and $\inf_\phi \overline{ G}_{\mathrm{ML}}(\phi,L_0)$ for $L=4$ and $L=10$ to span cases with small and large number of sidelobes. We compute the system performance  using ADP1 antennas that have relative large SLL and ADP3 antennas that are characterized by very small SLL. The results are shown in \fig~\ref{MLvsSL}. We notice that for ADP1 antennas with performance higher than $G_{\mathrm{SL}}$, the contribution of sidelobes is below $1$\,dB for the system with $L=10$ and the system with $L=4$ when $\fiN>60$\,$\deg$. While for all ADP3 the contribution of the sidelobes is less than $0.01$\,dB, which is negligible, for both $L=4$ and $L=10$. Hence we can conclude that the performance is mostly defined by the contribution of the main lobes. This back up the assumption put forward in the introduction, that directional antennas can be less affected by the car body effects if the antennas are mounted such that the distortion created by the car body and the antenna placement is mostly at the sidelobes regions.
		\begin{figure}[t]
		\center
		\includegraphics[width=\columnwidth]{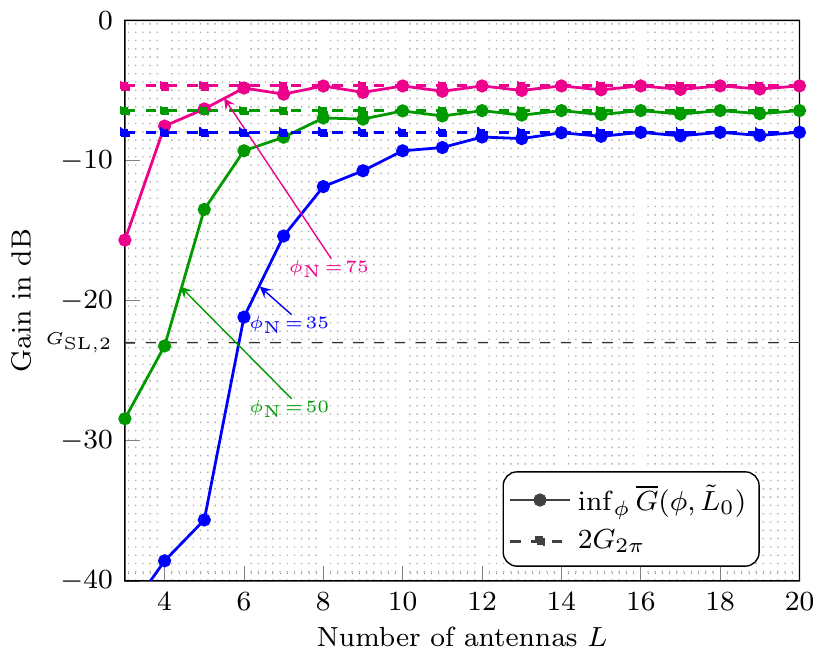}
		\caption{System performance as a function of the number of antenna elements $L$, computed for ADP2 patterns with $\fiN=35,50$, and $75 $\,$\deg$. The curves show the rate of approaching the performance bound.}	
		\label{ADP2_PhiN_L}
	\end{figure}
	\begin{figure}[t]
		\center
		\includegraphics[width=\columnwidth]{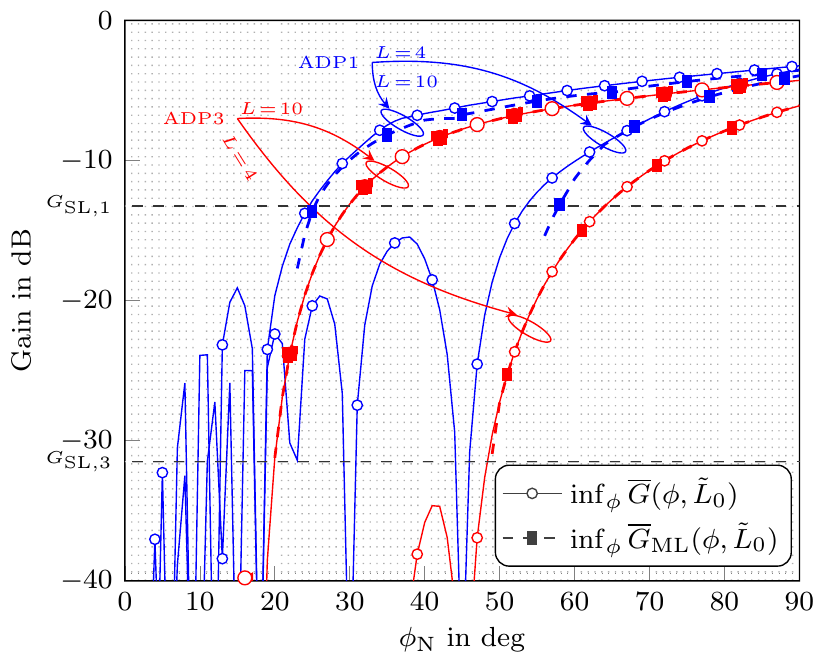}
		\caption{Comparison of the system performance when using full patterns and patterns without sidelobes. $L=4$ and $L=10$ elements of ADP1 and ADP3 patterns are used.}	
		\label{MLvsSL}
	\end{figure}		
\subsection{Comparison between Hybrid, Fully Analog and Fully Digital Combiners}
In this subsection, we would like to compare the performance of the HC with the performance of a fully digital and a fully analog combiner. We choose MRC with $P=L$ ports for the fully digital solution and the AC based on~\cite{mainACN} for the fully analog solution. We recall that the solution in~\cite{mainACN} assumes periodic broadcast V2V communication. It uses  a single port receiver and it combines the signals in absence of CSI using an analog phase shift network with phase slopes $\tilde{\boldsymbol{\alpha}}$. To derive the equivalent radiation pattern corresponding to these solutions, we first derive the SNR per packet. We use the same setting and assumptions as presented in Section~\ref{S2} and Section~\ref{S3}. Thus, the received signal at the antennas is given by~\eqref{SM:y} and the channel is modeled according to~\eqref{channel}. Then, for full MRC combining we have, 
\begin{align}
	\gamma_{\mathrm{D},k}=\frac{P_r}{\sigma_{\mathrm{n}}^2} \sum_{l=0}^{L-1}G_l(\phi),~k=0, 1, \ldots K-1,
\end{align}
where $ K$ corresponds to the number of packets in a burst and $P_\mathrm{r}=\mathbb{E}\{|a(t)x(t)|^2\}$.
For the fully analog combining of $L\leq K$ antennas using phase slopes $\tilde{\boldsymbol{\alpha}}$ and a single port, the SNR/packet can be derived in similar steps, and it is~\cite{mainACN}
\begin{align}
\gamma_{\mathrm{A},k}=\frac{P_r}{\sigma_{\mathrm{n}}^2} \sum_{l=0}^{L-1}G_l(\phi)/L,~k=0, 1,\ldots K-1.
\end{align}
After that, assuming an exponential PEP function as given in~\eqref{burst}, the the burst error probability of $K$ consecutive packets is given by~\eqref{PB}. Hence, the BrEP is inversely proportional to what we label the equivalent radiation patterns given by 
\begin{align}\label{G_D}
\overline{ G}_{\mathrm{D}}(\phi)=\sum_{l=0}^{L-1} G_l(\phi),\quad	\overline{ G}_{\mathrm{A}}(\phi)=\frac{1}{L}\sum_{l=0}^{L-1} G_l(\phi).
\end{align}
For the hybrid combining with $P=2$, we assume that $L$ even, and~\eqref{condition} holds. Then, the optimal configuration $(\Lzt,\tilde{L}_1)=(L/2,L/2)$ has equivalent radiation pattern $\overline{ G}(\phi,L/2)= 2/L \sum_{l=0}^{L-1} G_l(\phi) $. 

We can readily compare the different solutions. The gain of equivalent radiation pattern of the HC with $P=2$ is $3$\,dB higher, for all AOAs, than the gain of AC equivalent pattern, $\overline{ G}_{\mathrm{A}}(\phi)$. On the other hand, the gain of HC is $10\log_{10}(L/2)$\,dB lower, for all AOAs, than the gain of the full MRC equivalent pattern, $\overline{ G}_{\mathrm{D}}(\phi)$. Consequently, the equivalent patterns have the same omnidirectional characteristics but they differ in the gain. This can be observed in \fig~\ref{comp}. The results are based on deploying $L=4$ patch antennas that are spread around the azimuth plan. The normalized pattern of the patch antenna used is shown in \fig~\ref{pattern_dB}. The complexity of HC is in between the complexity of the other two counterparts. Hence, HC offers an attractive tradeoff between complexity and performance.
\begin{figure}[t]
	\center
	\includegraphics[width=\columnwidth]{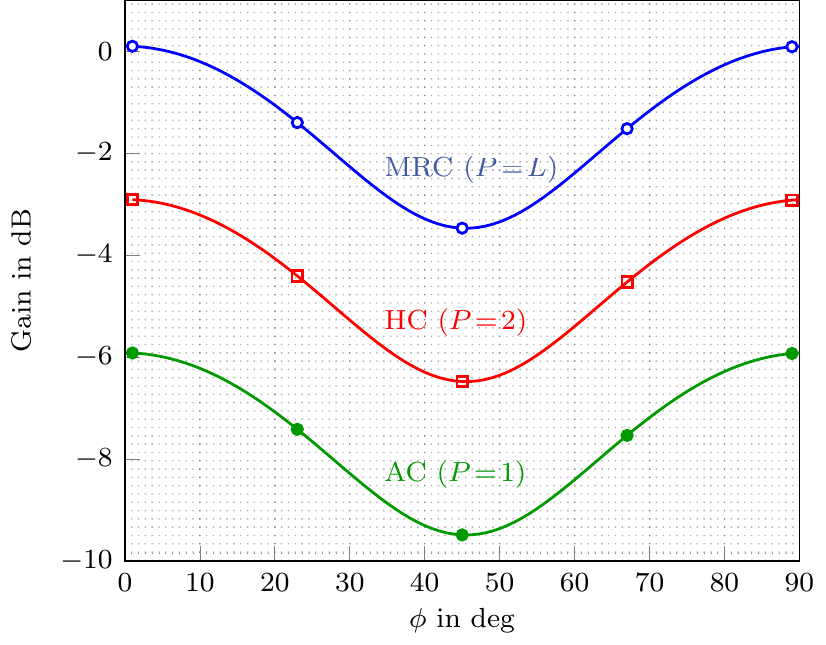}
	\caption{Comparison between the resultant equivalent pattern of the hybrid (HC), fully analog (AC)~\cite{mainACN} and fully digital (MRC) combiners, $L=4$.}	
	\label{comp}
\end{figure}

\section {Conclusion}
We propose a hybrid combining scheme for directional antennas to achieve robust omnidirectional coverage for broadcast vehicle-to-vehicle communication. This has the promise to reduce the effects of both the car body and antenna placement on the omnidirectional characteristic of coverage. 
The digital part of the hybrid combiner (HC) is based on MRC, and its analog part is realized using phase shifters with slopes $\mathbf{\tilde{\alpha}}$ that achieve an upper bound on the burst error probability of consecutive status messages for the worst-case AOA. Given the use of $L$ directional antenna elements and the proposed hybrid combiner with two digital ports we conclude the following.
\begin{itemize}
	\item Under the fulfillment of the condition presented in Theorem~\ref{th2} on the side lobe level (SLL) of the antennas, the optimal way to group the antennas among the two ports is $(\ceil{L/2},\floor{L/2})$. This maximizes our performance metric, i.e., the gain of the equivalent radiation pattern at the worst-case AOA (which is equivalent to minimizing the burst error probability).
	\item Computation results of the condition presented in Theorem~\ref{th2} using three examples of analytical directional patterns with varying SLLs, indicates that it can be easily satisfied for antennas with reasonable SLLs. 
	\item Only a relative small number of antennas are needed to approach optimum HC performance (as defined in Lemma~\ref{lem:limits}). 
	\item Performance improves by increasing of $L$ with an even number of antennas. However, numerical results suggest that when $L$ is even and $L\ge L_{\mathrm{th}}$, then adding a single antenna to the system will decrease performance.
    \item To best approach the performance bound and for best omnidirectional characteristics of the equivalent radiation pattern, the number of directional elements per port should be the same, and thus $L$ should be \textit{even}.
	\item The HC uses $(L-2)$ less ports than full MRC combining to achieve an equivalent radiation pattern with a similar omnidirectional properties when $L$ is even. However, the HC gain is $10\log_{10}(L/2)$\,dB less than full MRC (for any AOA). 
\end{itemize}

\begin{appendices}
	\section{Proof of Lemma~\ref{lem:limits} }
	\label{appendix_Lem1}
	\begin{proof}
		Let $\overline{ G}(\phi,\mathbf{S})$ defined as in~\eqref{G}, and let the average radiation of all antennas in the azimuth plane be the same,
		\begin{align}
		\frac{1}{2\pi}	\int_{2\pi} |g_l(\phi)|^2d\phi = G_{2\pi}, \quad\forall l.
		\end{align} 
		Then, it is easy to check that
		\begin{equation}
		\frac{1}{2\pi}	\int_{2\pi} \overline{ G}(\phi,\mathbf{S}) d\phi= PG_{2\pi}, \quad \mathbf{S}\in \mathcal{S}.
		\end{equation}
		By the monotonicity of Riemann integral it follows that 
		\begin{align}
		\int_{2\pi} \inf_{\phi\in[0, 2\pi)}\overline{G}(\phi,\mathbf{S})  d\phi\leq 
		\int_{2\pi}\overline{G}(\phi,\mathbf{S}) d\phi , \quad \mathbf{S}\in \mathcal{S}.
		\end{align}
		Thus, we can deduce that 
		\begin{align}
		\inf_{\phi\in[0, 2\pi)}
		\overline{G}(\phi,\mathbf{S}) \leq PG_{2\pi}, \quad \mathbf{S}\in \mathcal{S}.
		\end{align}
		By definition~\eqref{def:obj:Stilde}, $\tilde{\mathbf{S}} \in \mathcal{S}$. Replacing $\mathbf{S}$ by $\tilde{\mathbf{S}}$ in the previous equation,~\eqref{lem:lim1} follows.
	\end{proof}

\section{Proof of Theorem~\ref{th2}}
\label{appendix_Th2}
\begin{proof}
    We start by fixing $L$ and defining the function 
	\begin{align}\label{fa}
	    f_N(\phi)\triangleq\sum_{l=N}^{L-1}G_l(\phi),\quad  N=0,1, \ldots, L-1,
	\end{align}	
	where $G_l(\phi)=G(\phi-l2\pi/L)$ and $G(\phi)$ is as defined as in Theorem~\ref{th2}. We specifically recall that $\fiB<\pi/2$. 
    
    We can divide $G_l(\phi)$ into main lobe regions and sidelobes regions. The principal main lobe region for $G_l(\phi)$ is defined as  
	\begin{align}
	\mathcal{L}_l \triangleq\big(l2\pi/L-\fiB,l2\pi/L+\fiB \big), \quad l\in\mathbb{Z}. \label{eq:ML:def}
	\end{align}
	Clearly, $G_l(\phi) > G_{\mathrm{SL}}$ for $ \phi \in \mathcal{L}_l$. 
	In fact, since $G_l(\phi)$ is periodic with period $2\pi$,  $G_l(\phi)> G_{\mathrm{SL}}$ for $ \phi \in \mathcal{L}_{l+nL}$ for any integer $n$. 
	Since adjacent antennas are spaced with $2\pi/L$, their principal main lobe regions will overlap when $\fiB > \pi/L$, i.e., 
	\begin{align}
	    (\fiB > \pi/L)
	    &\Leftrightarrow (\inf\mathcal{L}_{l}  < \sup\mathcal{L}_{l-1}).
	    \label{eq:ML:overlap}
	\end{align}

	We note that $f_N(\phi)$ is periodic with period $2\pi$, and we will study its properties in the following interval of length $2\pi$,
	\begin{align}\label{big_I}
	\mathcal{I} &\triangleq [\sup\mathcal{L}_{-1},\sup \mathcal{L}_{L-1}).
	\end{align} 
	We define
	\begin{align}
	    \label{Ic}
    	\IzC(N) &\triangleq(\inf \mathcal{L}_N,\sup \mathcal{L}_{L-1}), &N=0,1, \ldots, L-1,
    	\end{align}
    	and 
    	\begin{align}
    	\Iz(N) &\triangleq \mathcal{I}\setminus \IzC(N), &N=0,1, \ldots, L-1 \label{I1},\\ 
    	&=[\sup \mathcal{L}_{-1}, \inf \mathcal{L}_N],&N=0,1, \ldots, L-1.
	\end{align}
	Hence, the principal main lobes of the $G_l(\phi)$ terms constituting $f_N(\phi)$ are contained in $\IzC(N)$, that is 
	\begin{align}\label{LinI0c}
	\mathcal{L}_l \subseteq	\IzC(N),\quad l=N, \ldots, L-1.
	\end{align}
	Conversely, the $G_l(\phi)$ terms in $f_N(\phi)$ have only sidelobes contributions in $\Iz(N)$. The combinations of $\fiB$ and $N$ that imply that $\Iz(N)\neq\emptyset$ are characterized by the following lemma. 
    \begin{lemma}\label{lem:IzN:nonempty}
        Let
    	\begin{align} 
        	\Mmin &\triangleq L\fiB/\pi-1, \label{eq:Mmin:def}\\
        	\M &\triangleq\{ m \in \mathbb{N}:\ceil{\Mmin}\leq m\leq L-1\} \label{M},
	    \end{align}
	    where $\mathbb{N} = \{0, 1, 2, \ldots\}$. 
	    Then the following conditions are equivalent
        \begin{equation}
            (\Iz(N)\neq\emptyset) \Leftrightarrow (N \in\M).
        \end{equation}
    \end{lemma}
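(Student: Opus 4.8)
The plan is to exploit the fact that, by the computation recorded just after~\eqref{I1}, $\Iz(N)$ is a \emph{closed} interval with explicitly computable endpoints, so the entire lemma collapses to a single scalar inequality. Recalling $\mathcal{L}_l=(l2\pi/L-\fiB,\,l2\pi/L+\fiB)$, I would substitute $\sup\mathcal{L}_{-1}=-2\pi/L+\fiB$ and $\inf\mathcal{L}_N=N2\pi/L-\fiB$ to obtain
\begin{equation*}
\Iz(N)=\bigl[-2\pi/L+\fiB,\;N2\pi/L-\fiB\bigr].
\end{equation*}
Being a closed interval, this set is non-empty exactly when its left endpoint does not exceed its right endpoint, which is the only thing that needs to be analysed.

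Next I would simplify that endpoint inequality. Starting from $-2\pi/L+\fiB\le N2\pi/L-\fiB$ and collecting terms gives $2\fiB\le(N+1)2\pi/L$, i.e.\ $\fiB\le(N+1)\pi/L$, which rearranges to $L\fiB/\pi-1\le N$. By the definition~\eqref{eq:Mmin:def} the left-hand side is exactly $\Mmin$, so at this stage I would have established the clean equivalence $\bigl(\Iz(N)\neq\emptyset\bigr)\Leftrightarrow\bigl(\Mmin\le N\bigr)$, valid for each $N$ in the admissible range.

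Finally I would convert the real inequality $\Mmin\le N$ into the membership condition $N\in\M$. Since $N$ is integer-valued, $\Mmin\le N$ is equivalent to $\ceil{\Mmin}\le N$ by the defining property of the ceiling, and intersecting this with the standing range $0\le N\le L-1$ over which $\Iz(N)$ is defined reproduces precisely the set $\M$ in~\eqref{M}. Every step above is an equivalence, so both implications follow at once. I do not expect a real obstacle here; the lemma is essentially a direct computation. The only points demanding care are the passage from the real threshold $\Mmin$ to the integer threshold $\ceil{\Mmin}$, which relies crucially on $N$ being an integer, and the boundary interpretation that the degenerate single-point interval arising when $\fiB=(N+1)\pi/L$ is counted as non-empty—consistent with treating $\Iz(N)$ as a closed interval, and matching the equality allowed by $\ceil{\Mmin}\le N$.
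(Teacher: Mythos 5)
Your proposal is correct and follows essentially the same route as the paper's proof: both reduce non-emptiness of the closed interval $\Iz(N)=[\sup\mathcal{L}_{-1},\inf\mathcal{L}_N]$ to the endpoint inequality $N2\pi/L-\fiB\ge -2\pi/L+\fiB$, rewrite it as $N\ge\Mmin$, and combine with $N\le L-1$ to obtain $N\in\M$. Your explicit handling of the ceiling (using that $N$ is an integer) and of the degenerate single-point case is a slightly more careful spelling-out of what the paper leaves implicit.
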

    \begin{proof}
	    We see from the definitions \eqref{big_I}, \eqref{Ic}, and \eqref{I1} that $\Iz(N)$ is nonempty if, and only if, 
        $\inf \mathcal{L}_N \ge \sup \mathcal{L}_{-1}$, i.e., if $(N(2\pi/L)-\fiB \ge -(2\pi/L)+\fiB)$. The latter condition can be rewritten as $N\ge\Mmin$ or, since $N\le L-1$, as $N\in\M$.
    \end{proof}
    From the above lemma and~\eqref{LinI0c}, we conclude that the terms in $f_N(\phi)$ can be bounded as
	\begin{align} \label{I0sl} 
	G_l(\phi)\leq G_{\mathrm{SL}},\quad \phi\in\Iz(N), N\le l \le L-1, N\in \mathcal{M}. 
	\end{align}
     We can now restate the main condition for the theorem~\eqref{condition} by noting that $\mathcal{J}=\Iz(L-1)$, as
    \begin{align} \label{condition:restated}
    G_{\mathrm{SL}}   < \frac{\floor{L/2}}{\ceil{L/2}}
    \inf_{\phi \in \Iz(L-1)}	\frac{1}{L-1}\sum_{l=0}^{L-2}G_l(\phi).
	\end{align}
\begin{lemma}\label{lem:condition:fiB}
	If~\eqref{condition:restated} holds then $\fiB> \frac{\pi}{L}$.
	\end{lemma}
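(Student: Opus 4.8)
The plan is to prove the contrapositive: assuming $\fiB \le \pi/L$, I will show that the strict inequality in~\eqref{condition:restated} cannot hold. The whole argument hinges on exhibiting a sub-interval of $\mathcal{J} = \Iz(L-1)$ on which every radiation pattern $G_l(\phi)$, $l = 0, \ldots, L-2$, sits in its sidelobe region, so that the averaged sum appearing on the right-hand side of~\eqref{condition:restated} is bounded by $G_{\mathrm{SL}}$.

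First I would observe that the hypothesis $\fiB \le \pi/L$ is exactly the threshold that activates Lemma~\ref{lem:IzN:nonempty} at $N=0$. Indeed, from~\eqref{eq:Mmin:def}, $\fiB \le \pi/L$ gives $\Mmin = L\fiB/\pi - 1 \le 0$, hence $\ceil{\Mmin} \le 0$ and $0 \in \M$ by~\eqref{M}. Lemma~\ref{lem:IzN:nonempty} then guarantees $\Iz(0) \neq \emptyset$; concretely, $\Iz(0) = [\,\fiB - 2\pi/L,\ -\fiB\,]$ from~\eqref{I1} is a genuine interval precisely because $2\fiB \le 2\pi/L$. On this interval, the bound~\eqref{I0sl} applied with $N = 0$ yields $G_l(\phi) \le G_{\mathrm{SL}}$ for all $l = 0, 1, \ldots, L-1$ and all $\phi \in \Iz(0)$ --- which is even stronger than what I need, since only $l \le L-2$ enters~\eqref{condition:restated}.

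Next I would relate $\Iz(0)$ to $\mathcal{J}$. Both intervals share the left endpoint $\fiB - 2\pi/L = \sup \mathcal{L}_{-1}$, and the right endpoint $-\fiB$ of $\Iz(0)$ is no larger than the right endpoint $(L-1)2\pi/L - \fiB$ of $\mathcal{J} = \Iz(L-1)$, so $\Iz(0) \subseteq \mathcal{J}$. Taking the infimum over the larger set can only decrease it, so $\inf_{\phi\in\mathcal{J}} \frac{1}{L-1}\sum_{l=0}^{L-2} G_l(\phi) \le \inf_{\phi\in\Iz(0)} \frac{1}{L-1}\sum_{l=0}^{L-2} G_l(\phi) \le G_{\mathrm{SL}}$, where the last step uses the termwise sidelobe bound just established. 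Since the infimum is nonnegative and $\floor{L/2}/\ceil{L/2} \le 1$, multiplying gives $\frac{\floor{L/2}}{\ceil{L/2}} \inf_{\phi\in\mathcal{J}} \frac{1}{L-1}\sum_{l=0}^{L-2} G_l(\phi) \le G_{\mathrm{SL}}$, which directly contradicts~\eqref{condition:restated}.

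The substantive step --- and the only place where care is needed --- is recognizing that $\fiB \le \pi/L$ is the exact condition that makes $\Iz(0)$ nonempty, thereby creating a window in which all $L-1$ relevant patterns lie simultaneously in their sidelobes; everything afterward is monotonicity of the infimum and the harmless factor $\floor{L/2}/\ceil{L/2} \le 1$. The one subtlety to check is that the periodic copies $\mathcal{L}_{l+nL}$ of the main lobes do not intrude on $\Iz(0)$, but this is already encoded in~\eqref{I0sl}, which accounts for the full periodic pattern, so no separate verification is required. I therefore do not expect a genuine obstacle: the proof is essentially a boundary-case reading of Lemma~\ref{lem:IzN:nonempty}.
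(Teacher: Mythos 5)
Your proposal is correct and follows essentially the same route as the paper's proof: both hinge on observing that $\fiB\le\pi/L$ forces $\Mmin\le 0$, hence $0\in\M$ and $\Iz(0)\neq\emptyset$, then apply the sidelobe bound~\eqref{I0sl} on $\Iz(0)\subset\Iz(L-1)$ and use $\floor{L/2}/\ceil{L/2}\le 1$ to contradict~\eqref{condition:restated}. The only difference is cosmetic (contrapositive versus direct contradiction), so nothing further is needed.
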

\begin{proof}
	We will use a proof of contradiction, i.e., we assume that \eqref{condition:restated} holds and $\fiB\le\pi/L$ and derive a contradiction.
	
	If $\fiB\leq \pi/L$,  then $\Mmin\leq 0$ and thus $\{0\} \in \mathcal{M}$, where  $\Mmin$ and $\mathcal{M}$ are defined according to~\eqref{eq:Mmin:def} and~\eqref{M}, respectively.
	By Lemma~\ref{lem:IzN:nonempty}, $\Iz(0)\neq \emptyset$, then~\eqref{I0sl} implies that
	\begin{align}
	G_{\mathrm{SL}} &\geq \inf_{\phi \in \Iz(0)}\frac{1}{L-1}\sum_{l=0}^{L-2}G_l(\phi)\\
    &\ge \inf_{\phi \in\Iz(L-1)}	\frac{1}{L-1}\sum_{l=0}^{L-2}G_l(\phi), \label{eq:GSL:greater:than}
	\end{align}
	where the last inequality holds since $\Iz(0) \subset \Iz(L-1)$.

	If~\eqref{condition:restated} holds and since $\floor{L/2}/\ceil{L/2} \le 1$ then 
	\begin{align}
	    G_{\mathrm{SL}} 
	    < \inf_{\phi\in\Iz(L-1)}\frac{1}{L-1} \sum_{l=0}^{L-2}G_l(\phi),
	\end{align}
	which contradicts~\eqref{eq:GSL:greater:than} and the lemma follows. 
\end{proof}
	
    The above lemma proves claim~(i) of the theorem. Following this result, we can assume $\fiB>\pi/L$ for the remainder of the proof.
	
   We will find it convenient to divide  $\IzC(N)$ to three parts. The middle part $\Io(N)$ is defined as
	\begin{align} \label{I2}
	\Io(N)\triangleq
	\begin{cases}
	[\sup \mathcal{L}_{N-1}, \inf \mathcal{L}_L], & \LN \geq \Mmin,\\
	[\inf \mathcal{L}_L,\sup \mathcal{L}_{N-1}], & \LN < \Mmin,\\
	\end{cases}
	\end{align}
    where
    \begin{equation}
        \LN\triangleq L - N \label{eq:Nbar:def}.
    \end{equation}
    The left-hand and right-hand parts of $\IzC(N)$ are defined as
	\begin{align}
	\Izo(N) &\triangleq \big(\inf \IzC(N), \inf \Io(N) \big), \label{I12}\\
	    &= \label{I12cases}
	    \begin{cases}
	        (\inf \mathcal{L}_N,\sup \mathcal{L}_{N-1}), & \LN \geq \Mmin,\\
	        (\inf \mathcal{L}_N,\inf \mathcal{L}_L), & \LN <\Mmin.\\
	    \end{cases}\\
	\Ioz(N) &\triangleq \big(\sup\Io(N) , \sup \IzC(N) \big ),\label{I21}\\
	    &=\label{I21cases}
	        \begin{cases}
	            (\inf \mathcal{L}_L, \sup \mathcal{L}_{L-1}), &\LN \geq \Mmin,\\
	            (\sup \mathcal{L}_{N-1},\sup \mathcal{L}_{L-1}), & \LN <\Mmin.\\
	    \end{cases}
	\end{align}
    The intervals $\Izo(N)$ and $\Ioz(N)$ are well-defined if $\Io(N)$ is nonempty, which indeed is the case as shown the following lemma. 
    \begin{lemma}
        \label{lem:IoN:nonempty}
        The following conditions are equivalent   
        \begin{equation}
            \label{eq:IzChatm:cond}
            (\inf\mathcal{L}_L \ge \sup\mathcal{L}_{N-1}) \Leftrightarrow (\LN \ge \Mmin).
        \end{equation}
        Moreover, $\Io(N)$ is nonempty.
    \end{lemma}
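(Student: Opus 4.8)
The plan is to reduce the stated equivalence~\eqref{eq:IzChatm:cond} to a one-line algebraic rearrangement of the two endpoints, and then to obtain non-emptiness of $\Io(N)$ essentially for free, since the two branches of its definition~\eqref{I2} are chosen precisely to match the two sides of that equivalence.

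First I would read off the endpoints directly from the definition~\eqref{eq:ML:def} of the principal main-lobe interval: setting $l=L$ gives $\inf\mathcal{L}_L = 2\pi-\fiB$, and setting $l=N-1$ gives $\sup\mathcal{L}_{N-1} = (N-1)\tfrac{2\pi}{L}+\fiB$. Then the inequality $\inf\mathcal{L}_L \ge \sup\mathcal{L}_{N-1}$ reads $2\pi-\fiB \ge (N-1)\tfrac{2\pi}{L}+\fiB$; collecting the two $\fiB$ terms and multiplying through by $L/(2\pi)$ turns this into $L-N+1 \ge L\fiB/\pi$, i.e.\ $L-N \ge L\fiB/\pi-1$. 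Recognizing the right-hand side as $\Mmin$ from~\eqref{eq:Mmin:def} and the left-hand side as $\LN$ from~\eqref{eq:Nbar:def}, this is exactly $\LN\ge\Mmin$. Every step is a reversible equivalence, so both implications of~\eqref{eq:IzChatm:cond} follow at once, and the identical chain with $\ge$ replaced by $<$ yields the strict companion statement $(\inf\mathcal{L}_L < \sup\mathcal{L}_{N-1}) \Leftrightarrow (\LN<\Mmin)$.

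For non-emptiness I would split into the two cases of~\eqref{I2}. When $\LN\ge\Mmin$, the equivalence just proved gives $\sup\mathcal{L}_{N-1}\le\inf\mathcal{L}_L$, so the closed interval $\Io(N)=[\sup\mathcal{L}_{N-1},\inf\mathcal{L}_L]$ has its endpoints correctly ordered and is therefore nonempty. When $\LN<\Mmin$, the strict companion statement gives $\inf\mathcal{L}_L<\sup\mathcal{L}_{N-1}$, so $\Io(N)=[\inf\mathcal{L}_L,\sup\mathcal{L}_{N-1}]$ is again (indeed non-degenerately) nonempty. There is essentially no obstacle here: the entire content of the lemma is the endpoint computation, and the only point worth stating explicitly is that the branching in~\eqref{I2} swaps the endpoints exactly so that the smaller one always comes first. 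This is why I would record both the non-strict and the strict form of the equivalence before performing the case split, ensuring the closed interval is well-defined and nonempty in each branch.
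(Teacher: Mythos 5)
Your proof is correct and follows essentially the same route as the paper: both reduce the equivalence to the algebraic identity $\inf\mathcal{L}_L - \sup\mathcal{L}_{N-1} = \frac{2\pi}{L}(\LN - \Mmin)$ (you phrase it as a reversible chain of inequalities, the paper as an explicit difference) and then obtain non-emptiness of $\Io(N)$ from the fact that the two branches of~\eqref{I2} order the endpoints correctly, with the degenerate singleton case when $\LN=\Mmin$ handled by the closedness of the interval.
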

    \begin{proof}
        We have that
        \begin{align}
            \inf\mathcal{L}_L - \sup\mathcal{L}_{N-1} 
            &= L \frac{2\pi}{L}-\fiB - \left[(N-1)\frac{2\pi}{L}+\fiB\right]\nonumber\\
            &= \frac{2\pi}{L}\left[L-N - \left(\frac{L\fiB}{\pi}-1\right)\right]\nonumber\\
            &= \frac{2\pi}{L}[\LN - \Mmin], \label{eq:Ioz:length}
        \end{align}
        which proves~\eqref{eq:IzChatm:cond}. It follows from~\eqref{eq:Ioz:length} and the definition~\eqref{I2} that 
        $|\Io(N)| = |2\pi(\LN - \Mmin)/L| > 0$  when $\LN \neq \Mmin$. Moreover, $\Io(N)=\{2\pi-\fiB\}$ when $\LN = \Mmin$. Hence, $\Io(N)$ is nonempty.
    \end{proof}

    It is easily seen that,
    \begin{equation}
        \label{eq:IzChat:def}
        \IzC(N) = \Izo(N)\cup\Io(N)\cup\Ioz(N)
    \end{equation}
     since $\Izo(N)$ and $\Ioz(N)$ are nonempty when $\fiB > \pi/L$, as proven in the following lemma.

    \begin{lemma}
        \label{lem:Ioz:Ioz:emptiness}
        \begin{align}
            \fiB > \pi/L &\Rightarrow \text{$\Izo\neq\emptyset$ and $\Ioz\neq\emptyset$}.
        \end{align}
    \end{lemma}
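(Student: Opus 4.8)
The plan is to prove nonemptiness directly from the piecewise definitions~\eqref{I12cases} and~\eqref{I21cases}, by computing the two endpoints of each interval and showing that the right endpoint strictly exceeds the left one. Throughout I would use the explicit endpoint formulas $\inf\mathcal{L}_l = l(2\pi/L)-\fiB$ and $\sup\mathcal{L}_l = l(2\pi/L)+\fiB$, which follow at once from the main-lobe definition~\eqref{eq:ML:def}. Fixing $N$, I would split the argument according to the two regimes $\LN\ge\Mmin$ and $\LN<\Mmin$ that select the branches of those definitions, handling $\Izo(N)$ and $\Ioz(N)$ in turn.

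For $\Izo(N)$: in the regime $\LN\ge\Mmin$ the interval is $(\inf\mathcal{L}_N,\sup\mathcal{L}_{N-1})$, so its length equals $\sup\mathcal{L}_{N-1}-\inf\mathcal{L}_N = 2\fiB-2\pi/L$, which is positive precisely under the hypothesis $\fiB>\pi/L$. In the regime $\LN<\Mmin$ the interval is $(\inf\mathcal{L}_N,\inf\mathcal{L}_L)$, whose length is $\inf\mathcal{L}_L-\inf\mathcal{L}_N = (L-N)(2\pi/L)$, and this is strictly positive because $N\le L-1$. For $\Ioz(N)$ the structure is symmetric: in the regime $\LN\ge\Mmin$ the interval $(\inf\mathcal{L}_L,\sup\mathcal{L}_{L-1})$ again has length $2\fiB-2\pi/L>0$ under $\fiB>\pi/L$, while in the regime $\LN<\Mmin$ the interval $(\sup\mathcal{L}_{N-1},\sup\mathcal{L}_{L-1})$ has length $(L-N)(2\pi/L)>0$ as before. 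Hence both $\Izo(N)$ and $\Ioz(N)$ are nonempty in either regime, which is the claim.

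I do not expect a substantive obstacle here: this is a direct endpoint calculation. The only point requiring care is the bookkeeping—matching each regime to the correct branch of the piecewise definitions, and recognizing the pattern behind the four subcases. The two ``tight'' branches are exactly those whose endpoints are supplied by main-lobe regions of antennas with index gap one (namely $\mathcal{L}_N$ versus $\mathcal{L}_{N-1}$, and $\mathcal{L}_{L-1}$ versus $\mathcal{L}_L$), and their nonemptiness is equivalent to the overlap condition $\fiB>\pi/L$ already recorded in~\eqref{eq:ML:overlap}. The remaining two branches have endpoints with index gap $L-N$ and are therefore nonempty unconditionally, since $N\le L-1$. Once this dichotomy is seen, the proof reduces to the four one-line difference computations above.
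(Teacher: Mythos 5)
Your proposal is correct and follows essentially the same route as the paper's proof: a case split on $\LN\ge\Mmin$ versus $\LN<\Mmin$, with the tight branches handled by the overlap condition $\fiB>\pi/L$ (equivalently $2\fiB-2\pi/L>0$) and the remaining branches by $N\le L-1$. You merely make explicit the endpoint arithmetic that the paper leaves implicit.
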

    \begin{proof}

        Let $\fiB>\pi/L$, which according to~\eqref{eq:ML:overlap} implies that $\inf\mathcal{L}_{l}  < \sup\mathcal{L}_{l-1}$. 
        
        Suppose $\LN\ge\Mmin$. Since main lobe regions overlap, it follows from \eqref{I12cases} and \eqref{I21cases} that $\Izo\neq\emptyset$ and $\Ioz\neq\emptyset$.

        Suppose $\LN<\Mmin$. Since $N<L$, it follows immediately from \eqref{I12cases} and \eqref{I21cases} and the definition of $\mathcal{L}_l$ that $\Izo\neq\emptyset$ and $\Ioz\neq\emptyset$.
    \end{proof}
    We are now ready to state and prove a number of properties for $f_N(\phi)$.

\begin{figure*}[ht]
\centering
\subfigure[Case $\bar{N}\geq \Mmin$, ($N=4$)]
{
\includegraphics[width=0.9\columnwidth]{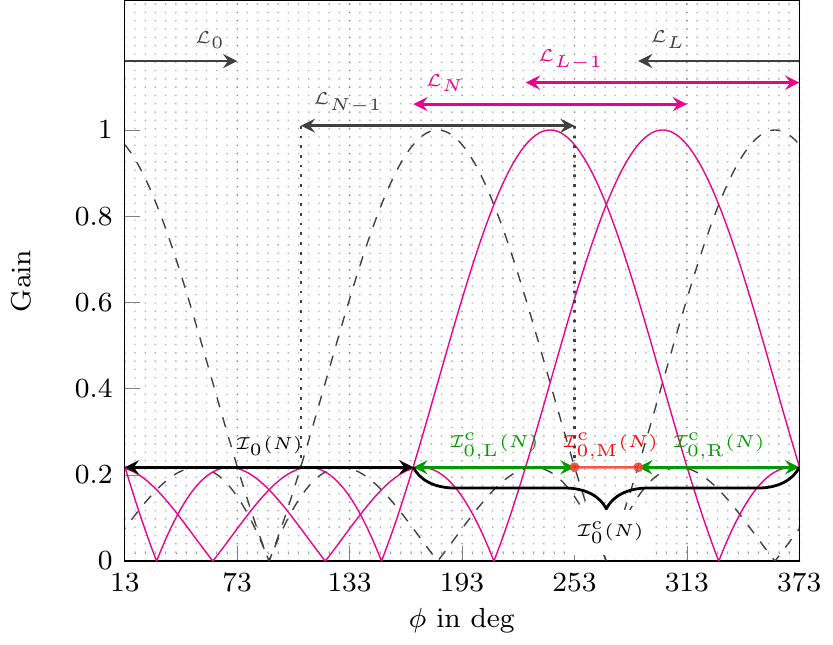}
}
\subfigure[Case $\bar{N}<\Mmin$, ($N=5$)]
{
\includegraphics[width=0.9\columnwidth]{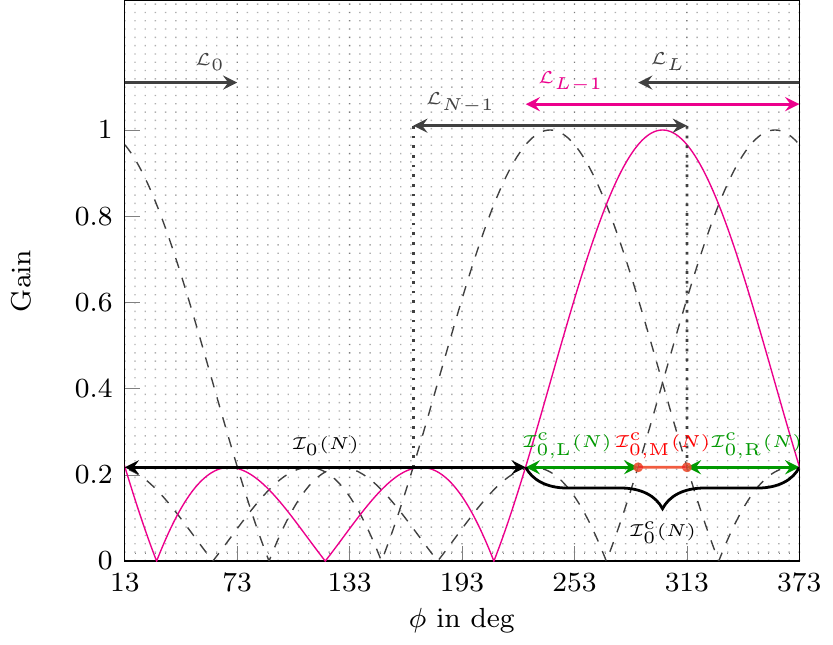}
}
\caption{Illustration of the intervals $\Iz(N)$, $\IzC(N)$, $\Io(N)$, $\Ioz(N)$ and $\Izo(N)$ for different cases, $L=6$, $\fiB=73$\,$\deg$.}	
\label{}
\end{figure*}

	\begin{lemma}\label{Lem_f}
		Fix $L$. Let $f_N(\phi)$ be defined as in~\eqref{fa}, where $N=0, \ldots, L-1$, and let
		
		$\LN=L-N$. Then, $f_N$ has the following properties.
		\begin{enumerate}[label=(\roman*)]
			\item\label{Lem_f:p1} For $n= 0, 1, \ldots, \LN-1$
			\begin{align}\label{Pi0}
			f_N\big(\phi-n\frac{2\pi}{L}\big)=\sum_{l=N+n}^{L-1}G_l(\phi)+\sum_{l=0}^{n-1}G_l(\phi),
			\end{align}
			and 
			\begin{align}\label{Pi1}
			f_N\big(\phi-\LN \frac{2\pi}{L}\big)&=\sum_{l=0}^{\LN-1}G_l(\phi)=f_N\big(\phi+N\frac{2\pi}{L}\big).
			\end{align}
			\item\label{Lem_f:p2}   $f_0(\phi)=\sum_{l=0}^{L-1}G_l(\phi)$ is periodic with period $2\pi/L$,
			\begin{align}
			f_0(\phi-n2\pi/L)=f_0(\phi), \quad n\in \mathbb{Z}.
			\end{align}
			
			\item\label{Lem_f:p3} If $N \in \mathcal{M}$, then
		
			\begin{align}\label{Piii}
			f_N(\phi)\leq (L-N)G_{SL} ,\quad  \phi \in \Iz(N),
			\end{align}
			where $\Iz(N)$ is defined in~\eqref{I1}.
		
			\item\label{Lem_f:p4}
			If~\eqref{condition:restated} holds and $Q \in \mathcal{M}$, then 
			\begin{align}
			f_0(\phi) -f_Q(\phi) &= \sum_{l=0}^{Q -1}G_l(\phi) \nonumber\\
			&> \frac{\ceil{L/2}}{\floor{L/2}}Q G_{\mathrm{SL}},\quad \phi \in \Iz(Q).\label{con_f}
			\end{align}
			\item \label{Lem_f:p5}  If~\eqref{condition:restated} holds and $N\neq 0$, then 
			\begin{align} \label{cond1}
			f_{N}(\phi)>\LN G_{\mathrm{SL}}, \quad \phi \in  \Io(N),
			\end{align}
			where $\Io(N)$ is defined in~\eqref{I2}.
			\item\label{Lem_f:p6} For $N\geq \ceil{L/2}$, 
			\begin{align}
			f_N(\phi)>f_N\big(\phi-2\pi/L\big), ~	\phi \in \Izo(N),\label{cond02}
			\end{align}
			where $\Izo(N)$ is defined according to~\eqref{I12}.
			\item\label{Lem_f:p7} For $N\geq \ceil{L/2}$, 
			\begin{align}
			f_N(\phi)>f_N(\phi+2\pi/L),~	\phi \in \Ioz(N), \label{cond03}
			\end{align}
			where $\Ioz(N)$ is defined according to~\eqref{I21}.
		\end{enumerate}

	\end{lemma}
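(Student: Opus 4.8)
The plan is to prove the seven properties in order, reducing the harder later ones to the easier earlier ones plus a single reflection identity. Properties~\ref{Lem_f:p1} and~\ref{Lem_f:p2} are pure bookkeeping: since $G$ has period $2\pi$, the shift $\phi\mapsto\phi-2\pi/L$ turns $G_l(\phi)=G(\phi-l2\pi/L)$ into $G_{l+1}(\phi)$ with indices read modulo $L$ (so $G_L\equiv G_0$). Iterating $n$ times and peeling off the wrapped indices gives~\eqref{Pi0}; taking $n=\LN$ empties the first sum and yields~\eqref{Pi1}; and $N=0$ specialises to the $2\pi/L$-periodicity of $f_0$ in~\ref{Lem_f:p2}. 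Property~\ref{Lem_f:p3} is then immediate, since for $N\in\M$ each of the $L-N$ terms of $f_N$ obeys the sidelobe bound~\eqref{I0sl} on $\Iz(N)$, so their sum is at most $(L-N)G_{\mathrm{SL}}$.

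Before~\ref{Lem_f:p4} and~\ref{Lem_f:p5} I would record the reflection $R(\phi)\triangleq(L-1)2\pi/L-\phi$. Because $G$ is even, $G_l(R(\phi))=G_{L-1-l}(\phi)$ with indices mod $L$, hence $f_N(R(\phi))=\sum_{l=0}^{\LN-1}G_l(\phi)$; moreover $R$ is a decreasing affine involution that carries the endpoints of $\Io(N)$ exactly onto those of $\Iz(\LN)$ when $\LN\ge\Mmin$. For~\ref{Lem_f:p4} I write $\sum_{l=0}^{Q-1}G_l=\sum_{l=0}^{L-2}G_l-\sum_{l=Q}^{L-2}G_l$. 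On $\Iz(Q)\subseteq\Iz(L-1)$ the subtracted antennas contribute only sidelobes, so $\sum_{l=Q}^{L-2}G_l\le(L-1-Q)G_{\mathrm{SL}}$ by~\eqref{I0sl}, while the condition~\eqref{condition:restated} bounds $\sum_{l=0}^{L-2}G_l$ from below by $(L-1)\tfrac{\ceil{L/2}}{\floor{L/2}}G_{\mathrm{SL}}$ on the larger interval. Subtracting, the claimed inequality~\eqref{con_f} reduces to $(L-1-Q)\bigl(\tfrac{\ceil{L/2}}{\floor{L/2}}-1\bigr)\ge0$, which holds because $Q\le L-1$ and $\ceil{L/2}\ge\floor{L/2}$.

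For~\ref{Lem_f:p5} I split on the two cases in the definition~\eqref{I2} of $\Io(N)$. When $\LN\ge\Mmin$, the reflection gives $f_N(\phi)=\sum_{l=0}^{\LN-1}G_l(R(\phi))$ with $R(\phi)\in\Iz(\LN)$; since $N\neq0$ forces $\LN\in\M$, part~\ref{Lem_f:p4} with $Q=\LN$ gives $f_N(\phi)>\tfrac{\ceil{L/2}}{\floor{L/2}}\LN G_{\mathrm{SL}}\ge\LN G_{\mathrm{SL}}$. When $\LN<\Mmin$ the region $\Io(N)$ is small enough to lie inside $\bigcap_{l=N}^{L-1}\mathcal{L}_l$, so all $\LN$ antennas of $f_N$ are in their main lobes throughout $\Io(N)$ and $f_N>\LN G_{\mathrm{SL}}$ follows directly, with no appeal to the condition.

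Finally,~\ref{Lem_f:p6} and~\ref{Lem_f:p7} telescope via part~\ref{Lem_f:p1}: $f_N(\phi)-f_N(\phi-2\pi/L)=G_N(\phi)-G_0(\phi)$ and $f_N(\phi)-f_N(\phi+2\pi/L)=G_{L-1}(\phi)-G_{N-1}(\phi)$. It then suffices to show, on $\Izo(N)$ and $\Ioz(N)$ respectively, that the positively-signed pattern sits in its main lobe (value $>G_{\mathrm{SL}}$) while the negatively-signed one sits in the sidelobe region (value $\le G_{\mathrm{SL}}$); this membership is verified by elementary interval arithmetic, using $N\ge\ceil{L/2}$ (so the reference antenna near angle $0$ is seen at an offset $\ge\pi-\fiB\ge\fiB$) and $\fiB<\pi/2$, again separately for $\LN\ge\Mmin$ and $\LN<\Mmin$. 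I expect the main obstacle to be exactly this geometric case analysis: the intervals $\Io$, $\Izo$, $\Ioz$ switch form at $\LN=\Mmin$, so each range check must be done twice and reconciled on the circle modulo $2\pi$, and the reflection in~\ref{Lem_f:p5} must be arranged to land precisely on $\Iz(\LN)$ in order to license the use of~\ref{Lem_f:p4}.
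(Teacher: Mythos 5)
Your proposal is correct and follows essentially the same route as the paper: the same telescoping/index-shift identities for~\ref{Lem_f:p1}--\ref{Lem_f:p2}, the same sidelobe bound for~\ref{Lem_f:p3}, the same subtraction argument for~\ref{Lem_f:p4}, the same two-case split on $\LN \gtrless \Mmin$ for~\ref{Lem_f:p5}, and the same difference-of-two-terms main-lobe/sidelobe membership check for~\ref{Lem_f:p6}--\ref{Lem_f:p7}. The only deviation is in~\ref{Lem_f:p5} for $\LN\ge\Mmin$, where you transport $\Io(N)$ onto $\Iz(\LN)$ via the reflection $\phi\mapsto(L-1)2\pi/L-\phi$ (which needs the evenness of $G$), whereas the paper uses the translation $\phi\mapsto\phi+N2\pi/L$ from~\eqref{Pi1} (which needs only periodicity); both maps carry $\Io(N)$ exactly onto $\Iz(\LN)$ and then invoke~\ref{Lem_f:p4} with $Q=\LN$, so the two arguments are interchangeable.
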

	\begin{proof}
	    Property \ref{Lem_f:p1}:
		Let $n=0, 1, \ldots, \LN-1$. By definition of $f_N(\phi)$, we have
			\begin{align}
			f_N\big(\phi-n\frac{2\pi}{L}\big)&=\sum_{l=N}^{L-1}G(\phi-(l+n)2\pi/L)\nonumber\\
			&=\sum_{i=N+n}^{L-1}G_{i}(\phi)+\sum_{i=L}^{L+n-1}G_{i}(\phi).\label{shift1}
			\end{align}
			Since $G(\phi)$ is periodic, the second sum is equal to 
			\begin{equation}
			\sum_{l=0}^{n-1}G(\phi-(l+L)\frac{2\pi}{L})=\sum_{l=0}^{n-1}G_l(\phi),
			\end{equation}
			which proves~\eqref{Pi0}. To prove~\eqref{Pi1},  we have
			\begin{align}
			f_N\big(\phi-\LN\frac{2\pi}{L}\big)&=\sum_{l=N}^{L-1}G(\phi-(l+L-N)2\pi/L)\nonumber\\
			&=\sum_{l=N}^{L-1}G(\phi+N2\pi/L - l2\pi/L)\label{Pi1:half}\\
			&=\sum_{l=0}^{\LN-1}G(\phi - l2\pi/L)\label{Pi1:half2}.
			\end{align}
			From another aspect,~\eqref{Pi1:half} can be expressed as
			\begin{align}
					\sum_{l=N}^{L-1}G(\phi+N2\pi/L - l2\pi/L)=f_N(\phi+N2\pi/L).
			\end{align}
			Putting this together with~\eqref{Pi1:half2} proves~\eqref{Pi1} and the property follows.

			Property \ref{Lem_f:p2}: Let $n=kL+ n'$, where $n', k \in \mathbb{Z}$ and $0\leq n'\leq (L-1)$. Then,
			\begin{align}
			f_0\big(\phi- n2\pi/L\big)&=f_0\big(\phi- n'2\pi/L\big)\label{seq:p2} \\
			&=\sum_{l=n'}^{L-1}G_l(\phi)+\sum_{l=0}^{n'-1}G_l(\phi)\label{ii_1}\\
			&=f_0(\phi), 
			\end{align}
			where~\eqref{seq:p2} holds since $f_0$ is periodic with period $2\pi$, and~\eqref{ii_1} follows from~\eqref{Pi0} (since $N=0$, $n'< \LN =L$).
			
			Property \ref{Lem_f:p3}:
			Follows directly from~\eqref{I0sl}.
            
            Property \ref{Lem_f:p4}: 
			By assumption~\eqref{condition:restated} holds. It implies that 
			\begin{align}
			\sum_{l=0}^{(L-1)-1}G_l(\phi)> a (L-1) G_{\mathrm{SL}}, \quad \phi \in \Iz(L-1), \label{Lem_f_p:p4:eq0}
			\end{align}
			where $a= \ceil{L/2}/\floor{L/2}\ge 1$. 
			Hence the property holds for $Q=L-1$. 
			Now, let $Q < L-1$, then for $\phi\in\Iz(L-1)$,
			\begin{align}
			\sum_{l=0}^{Q-1}G_l(\phi) &= \sum_{l=0}^{L-2} G_l(\phi) - \sum_{l=Q}^{L-2} G_l(\phi)\nonumber\\
			&> a (L-1) G_{\mathrm{SL}} - \sum_{l=Q}^{L-2} G_l(\phi) \label{eq:inequality}\\
			&= a Q G_{\mathrm{SL}} +a(L-1-Q)G_{\mathrm{SL}} 
			- \sum_{l=Q}^{L-2} G_l(\phi), \label{Lem_f_p:p4:eq1}
			\end{align}
			where~\eqref{eq:inequality} follows from~\eqref{Lem_f_p:p4:eq0}.
			Since $Q < L-1$, it follows from the definition of $\Iz(N)$ that $\Iz(Q) \subset \Iz(L-1)$. Hence, \eqref{Lem_f_p:p4:eq1} holds in particular for $\phi\in\Iz(Q)$. What remains is to show that the sum of last two terms in \eqref{Lem_f_p:p4:eq1} is nonnegative. To this end, let $\phi\in\Iz(Q)$. Then by~\eqref{I0sl}, we have that $G_l(\phi)\leq G_{\mathrm{SL}}$, for $l=Q, Q+1, \ldots, L-1$ and 
			\begin{align}
			a(L-1&-Q)G_{\mathrm{SL}} 
			- \sum_{l=Q}^{L-2} G_l(\phi) \nonumber\\
		    &\ge a(L-1-Q)G_{\mathrm{SL}} - (L-1-Q)G_{\mathrm{SL}} \nonumber\\
		    &= (a-1)(L-1-Q)G_{\mathrm{SL}} \nonumber\\
		    &\ge 0,
			\end{align}  
            since $a\ge 1$, and the property follows.
            
			Property \ref{Lem_f:p5}: Let $N\neq 0$, then $\LN\leq L-1$. Based on the definition of $\Io(N)$ we divide the proof into two cases $\LN\geq \Mmin$ and $\LN<\Mmin$.
			
			We start tackling the case $\LN\geq \Mmin$. From~\eqref{I2} we have $\Io(N)=[\sup \mathcal{L}_{N-1}, \inf \mathcal{L}_L ]$. Since~\eqref{condition:restated} holds and $\Mmin \leq \LN \leq L-1$, that is $\LN\in \mathcal{M}$, Lemma~\ref{Lem_f} property 
			\ref{Lem_f:p4} is applicable with $Q=\LN$. We can use~\eqref{Pi1},~\eqref{con_f}, and the fact that $\ceil{L/2}/\floor{L/2}\ge 1$ to show
			\begin{equation}
			    f_N(\phi' + N2\pi/L) > \LN G_{\mathrm{SL}}, \qquad \phi' \in \Iz(\LN) \label{Lem_f_p:p5:eq1}.
			\end{equation}
			Now if $\phi' \in \Iz(\LN)$, then by definition, 
			\begin{equation}
			    - 2\pi/L+\fiB  \le \phi' \le\LN 2\pi/L - \fiB.
			\end{equation} 
			Adding $N2\pi/L$ to the inequality sides we get,
			\begin{align}
				\phi' + N2\pi/L\ge 	(N-1)2\pi/L + \fiB \nonumber \\
				\phi' + N2\pi/L\le (\LN+N) 2\pi/L - \fiB,
			\end{align}
    		which, since $\LN + N = L$, is equivalent to
			\begin{align*}
			     \sup \mathcal{L}_{N-1} &\le \phi' + N2\pi/L \le \inf \mathcal{L}_L.
			 \end{align*}
			Thus, if $\phi' \in \Iz(\LN)$, then $\phi = (\phi' + N2\pi/L) \in \Io(N)$ and  
			\eqref{Lem_f_p:p5:eq1} can be written as
			\begin{equation}
			    f_N(\phi) > \LN G_{\mathrm{SL}}, \qquad \phi \in \Io(N),
			\end{equation}
			which proves the property for the case when $\LN\geq \Mmin$.

			We now move to the case $\LN<\Mmin$ where $\Io(N)=[\inf\mathcal{L}_L, \sup \mathcal{L}_{N-1}]$. From the definition of $\mathcal{L}_l$ it follows that for $l=N, N+1,\ldots L-1$,
			\begin{align}
			\inf \Io(N) &= \inf\mathcal{L}_L > \inf  \mathcal{L}_l\\
			\sup \Io(N) &= \sup \mathcal{L}_{N-1} < \sup \mathcal{L}_l,
			\end{align}
			which implies that $\Io(N) \subset \mathcal{L}_l$ for $l=N, N+1,\ldots L-1$ and
			\begin{equation}
			    f_N(\phi) = \sum_{l=N}^{L-1} G_l(\phi) > (L-N) G_{\mathrm{SL}}, \qquad \phi\in\Io(N). 
			\end{equation}
			Hence, \eqref{cond1} holds when $\LN<\Mmin$ as well, and the property follows.
			
			Property \ref{Lem_f:p6}: We recall that $\fiB> \pi/L$, which implies $\Ioz(N)\neq \emptyset$ (Lemma~\ref{lem:Ioz:Ioz:emptiness}). Using~\eqref{Pi0}, we have that
			\begin{align}
			f_N(\phi-2\pi/L)&=\sum_{l=N+1}^{L-1}G_l(\phi)+ G_0(\phi),
			\end{align}
			and
			\begin{align}\label{f-f}
			f_N(\phi)-f_N(\phi-2\pi/L)&= G_N(\phi)-G_0(\phi).
			\end{align}
			From here, we prove~\eqref{cond02} by showing that $G_N(\phi)$ is a main lobe term and $G_0(\phi)$ is a sidelobes term when $\phi \in \Izo(N)$.  We demonstrate this for the two cases $\LN \geq \Mmin$ and $\LN<\Mmin$. 
			
			Let $\LN \geq \Mmin$. By definition~\eqref{I12cases}, we have that $\Izo(N)=(\inf \mathcal{L}_N, \sup \mathcal{L}_{N-1})$. Since
			 $\sup \mathcal{L}_{N-1}<\sup \mathcal{L}_{N}$, we see that $ \Izo(N)\subset  \mathcal{L}_N$, and 
			\begin{align}
			G_N(\phi)&>G_{\mathrm{SL}},\quad  \phi \in \Izo(N).
			\end{align}
			Then it is enough to show that $G_0(\phi)\leq G_{\mathrm{SL}}$, when $\phi \in \Izo(N)$ for~\eqref{cond02} to hold. This is equivalent to $\Izo(N) \subset (\sup\mathcal{L}_0, \inf \mathcal{L}_L)$.
			Since $N\geq \ceil{L/2}$, and we recall that $\fiB<\pi/2$, then $\inf \Izo(N)=N2\pi/L-\fiB\geq \pi-\fiB>\fiB=\sup\mathcal{L}_0$. Hence, 
			\begin{align}
			\inf \Izo(N)>\sup\mathcal{L}_0.
			\end{align}
			On the other hand, since $\LN \geq \Mmin$, then by~\eqref{eq:IzChatm:cond}
			\begin{align}
			\sup \Izo(N)=\sup \mathcal{L}_{N-1}\le\inf \mathcal{L}_L,
			\end{align}
			and thus we can conclude that $G_0(\phi)\le G_{\mathrm{SL}}$ for $\phi\in\Izo(N)$, and the property is proved for $\LN \geq \Mmin $.
	
			Now we proceed to the case $\LN< \Mmin$. By definition we have that $\Izo(N)=(\inf \mathcal{L}_N,  \inf \mathcal{L}_L)$. Then, again, 
			since  $N\geq \ceil{L/2}$, we have that $\inf \Izo(N)>\sup\mathcal{L}_0$ and, therefore, $G_0(\phi)\leq G_{\mathrm{SL}}$ for $\phi \in \Izo(N)$.
			Moreover, by~\eqref{eq:IzChatm:cond}, we have $\inf \mathcal{L}_L<\sup \mathcal{L}_{N-1}<\sup \mathcal{L}_{N}$. Thus, it follows that $ \Izo(N)\subset  \mathcal{L}_N$ and  $G_N(\phi)>G_{\mathrm{SL}}$ for $\phi\in\Izo(N)$. Hence,~\eqref{cond02} holds when $\LN < \Mmin $ as well and this ends the proof of the property.	
			
			Property \ref{Lem_f:p7}: Following the same steps as used to prove~Property \ref{Lem_f:p6}, we can write $f_N(\phi)-f_N(\phi+2\pi/L)= G_{L-1}(\phi)-G_{N-1}(\phi)$. Then we can deduce that $\Ioz(N) \subset \mathcal{L}_{L-1}$ and $\Ioz(N) \subset (\sup\mathcal{L}_{N-1},\inf \mathcal{L}_{N-1}+2\pi)$ and therefore~\eqref{cond03} holds.
			
	\end{proof} 
	
	\begin{lemma} \label{Lem:L0inM}
	    Let 
	    \begin{equation}
	        \label{eq:Lplus:def}
	        \mathcal{L}^+ \triangleq\{l\in\mathbb{Z}: L/2< l\le  L-1\}.
	    \end{equation}
	    If $L_0\in\mathcal{L}^+$, then $L_0\in\M$.
	\end{lemma}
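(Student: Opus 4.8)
The plan is to verify directly that an integer $L_0$ satisfying $L/2 < L_0 \le L-1$ meets the two requirements for membership in $\M = \{m\in\mathbb{N}:\ceil{\Mmin}\le m\le L-1\}$, namely $L_0\in\mathbb{N}$ and $\ceil{\Mmin}\le L_0\le L-1$. The upper bound $L_0\le L-1$ is immediate from the definition of $\mathcal{L}^+$ in~\eqref{eq:Lplus:def}, and $L_0\in\mathbb{N}$ follows since $L_0 > L/2 \ge 0$ forces $L_0$ to be a positive integer. Hence the only nontrivial task is to establish the lower bound $\ceil{\Mmin}\le L_0$.

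First I would exploit the standing directional-antenna assumption $\fiB<\pi/2$ to bound $\Mmin$ from above. Substituting into the definition~\eqref{eq:Mmin:def}, $\Mmin = L\fiB/\pi - 1$, gives $\Mmin < L(\pi/2)/\pi - 1 = L/2 - 1$. Next I would chain this bound with the hypothesis $L_0 > L/2$ to obtain $\Mmin < L/2 - 1 < L/2 < L_0$, so that $\Mmin < L_0$.

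The final step converts the strict real inequality $\Mmin < L_0$ into the integer inequality $\ceil{\Mmin}\le L_0$. Since $L_0$ is an integer with $L_0 > \Mmin$ (hence $L_0\ge\Mmin$) and $\ceil{\Mmin}$ is by definition the smallest integer not less than $\Mmin$, it follows that $\ceil{\Mmin}\le L_0$. Combining this with $L_0\le L-1$ and $L_0\in\mathbb{N}$ yields $L_0\in\M$, as claimed.

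The main obstacle, such as it is, lies not in any lengthy computation but in making the rounding step airtight: the passage from $\Mmin < L_0$ to $\ceil{\Mmin}\le L_0$ relies crucially on $L_0$ being an integer, and the bound $\Mmin < L/2 - 1$ in turn rests essentially on $\fiB<\pi/2$. I note that neither the overlap condition $\fiB>\pi/L$ nor the main hypothesis~\eqref{condition:restated} is needed for this lemma.
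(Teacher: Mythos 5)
Your proof is correct and follows essentially the same route as the paper's: both arguments use only the standing assumption $\fiB<\pi/2$ to get $\Mmin < L/2-1 < L/2 < L_0$ and then invoke the integrality of $L_0$ to pass to $\ceil{\Mmin}\le L_0$ (the paper just carries the ceiling through the chain from the start). Your closing observation that neither $\fiB>\pi/L$ nor condition~\eqref{condition:restated} is needed is also accurate.
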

	\begin{proof}
		We recall that $\pi/L<\fiB<\pi/2$. Let $L_0\in\mathcal{L}^+$. We have $\min\M =\ceil{ \Mmin} = \ceil{L\fiB/\pi}-1 < \ceil{L/2}-1 < L/2 < L_0$.
       Hence, $\min\mathcal{M} < L_0 \le \max\mathcal{M}$, and the lemma follows. 
	\end{proof}

	\begin{lemma} \label{Lem_inf}
		Let $\overline{ G}(\phi,L_0)$ and $\Iz(L_0)$ be defined according to~\eqref{G_eq} and \eqref{I1}, respectively. If 
		$L_0\in\mathcal{L}^+$ and~\eqref{condition:restated} is satisfied, then 
		\begin{equation}\label{eqLem}
		\inf_{\phi\in [0,2\pi)}	\overline{ G}(\phi,L_0) = \inf_{\phi \in \Iz(L_0)} \overline{ G}(\phi,L_0).
		\end{equation}
	\end{lemma}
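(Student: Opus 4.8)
The plan is to reduce the global infimum to the smaller window $\Iz(L_0)$ by exploiting the near‑periodic structure of $\overline{G}(\phi,L_0)$. First I would rewrite the equivalent pattern in~\eqref{G_eq} in terms of the tail sums $f_N$ of~\eqref{fa}: since $\sum_{l=L_0}^{L-1}G_l(\phi)=f_{L_0}(\phi)$ and $\sum_{l=0}^{L_0-1}G_l(\phi)=f_0(\phi)-f_{L_0}(\phi)$, we obtain
\begin{equation*}
\overline{G}(\phi,L_0)=\frac{1}{L_0}f_0(\phi)+\Big(\frac{1}{L-L_0}-\frac{1}{L_0}\Big)f_{L_0}(\phi),
\end{equation*}
where the coefficient of $f_{L_0}$ is strictly positive because $L_0>L/2>L-L_0$. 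By Lemma~\ref{Lem_f} property~\ref{Lem_f:p2}, $f_0$ is periodic with period $2\pi/L$, so shifting $\phi$ by any integer multiple of $2\pi/L$ changes $\overline{G}$ only through $f_{L_0}$. Hence comparing $\overline{G}$ at two points differing by a multiple of $2\pi/L$ reduces to comparing $f_{L_0}$ there.

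Since $\overline{G}(\cdot,L_0)$ has period $2\pi$, it suffices to work on the single period $\mathcal{I}$ of~\eqref{big_I}, which splits as $\mathcal{I}=\Iz(L_0)\cup\IzC(L_0)$ with $\IzC(L_0)=\Izo(L_0)\cup\Io(L_0)\cup\Ioz(L_0)$ by~\eqref{eq:IzChat:def}. As $\Iz(L_0)\subseteq\mathcal{I}$, the inequality $\inf_{\mathcal{I}}\overline{G}\le\inf_{\Iz(L_0)}\overline{G}$ is immediate, so the whole content of~\eqref{eqLem} is the reverse: every $\phi\in\IzC(L_0)$ must satisfy $\overline{G}(\phi,L_0)\ge\inf_{\Iz(L_0)}\overline{G}(\cdot,L_0)$. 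A key preliminary fact I would establish is that $\Iz(L_0)$ has length exceeding $2\pi/L$: its length is $(L_0+1)\tfrac{2\pi}{L}-2\fiB$, which is larger than $2\pi/L$ precisely because $L_0>L/2$ and $\fiB<\pi/2$ force $L_0>L\fiB/\pi$. Consequently $\Iz(L_0)$ meets every residue class modulo $2\pi/L$, so for each $\phi$ there is a $\psi\in\Iz(L_0)$ with $\psi\equiv\phi\pmod{2\pi/L}$ and hence $f_0(\psi)=f_0(\phi)$. Note also that $L_0\in\mathcal{L}^+$ gives $L_0\in\M$ (Lemma~\ref{Lem:L0inM}) and $L_0\ge\ceil{L/2}$, so all properties of Lemma~\ref{Lem_f} apply with $N=L_0$.

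For the central piece $\Io(L_0)$ the argument is direct: property~\ref{Lem_f:p5} gives $f_{L_0}(\phi)>(L-L_0)G_{\mathrm{SL}}$ for $\phi\in\Io(L_0)$, while property~\ref{Lem_f:p3} gives $f_{L_0}(\psi)\le(L-L_0)G_{\mathrm{SL}}$ for the associated $\psi\in\Iz(L_0)$. Since $f_0(\phi)=f_0(\psi)$ and the coefficient of $f_{L_0}$ is positive, this yields $\overline{G}(\phi,L_0)>\overline{G}(\psi,L_0)\ge\inf_{\Iz(L_0)}\overline{G}$. For the two flanking pieces $\Izo(L_0)$ and $\Ioz(L_0)$ this sandwich is unavailable (there $f_{L_0}$ need not exceed $(L-L_0)G_{\mathrm{SL}}$), so instead I would use the monotonicity in properties~\ref{Lem_f:p6} and~\ref{Lem_f:p7}: on $\Izo(L_0)$ a shift by $-2\pi/L$ strictly decreases $f_{L_0}$, hence $\overline{G}$, and on $\Ioz(L_0)$ a shift by $+2\pi/L$ does the same. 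Iterating these shifts drives the point out of $\IzC(L_0)$ and, by the length bound above, into $\Iz(L_0)$ in finitely many steps, so the value at $\phi$ again dominates a value attained in $\Iz(L_0)$.

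The main obstacle is the interval bookkeeping rather than any single inequality. The definitions of $\Io$, $\Izo$, $\Ioz$ in~\eqref{I2}--\eqref{I21cases} branch on whether $\LN\ge\Mmin$ or $\LN<\Mmin$, and each branch must be checked separately when verifying that the $\pm2\pi/L$ shifts terminate inside $\Iz(L_0)$ and do not overshoot past its far endpoint $\sup\mathcal{L}_{-1}$; this is where the estimate $L_0>L\fiB/\pi$ and the standing assumption $\fiB<\pi/2$ do the real work. I would therefore spend most of the write‑up confirming, in both cases, that the endpoints of $\Izo(L_0)$ and $\Ioz(L_0)$ shifted by $2\pi/L$ land in $\Iz(L_0)$, so that the local monotonicity of properties~\ref{Lem_f:p6}--\ref{Lem_f:p7} can be chained into the global comparison claimed in~\eqref{eqLem}.
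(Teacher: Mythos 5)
Your proposal is correct and follows the paper's proof in all essentials: the same rewriting of $\overline{G}(\phi,L_0)$ as $f_0(\phi)$ plus a positive multiple of $f_{L_0}(\phi)$, the same split of $\IzC(L_0)$ into $\Izo(L_0)\cup\Io(L_0)\cup\Ioz(L_0)$, and the same use of Lemma~\ref{Lem_f} properties \ref{Lem_f:p3}, \ref{Lem_f:p5}, \ref{Lem_f:p6}, and \ref{Lem_f:p7}. The one place you diverge is the treatment of the flanking intervals, and there the paper's version is lighter than what you plan: instead of proving that every $\phi\in\IzC(L_0)$ dominates $\inf_{\Iz(L_0)}\overline{G}$ --- which, as you note, forces you to chain $\pm 2\pi/L$ shifts until the point lands in $\Iz(L_0)$ and to verify in each branch ($\LN\ge\Mmin$ versus $\LN<\Mmin$, plus the mod-$2\pi$ wraparound on the $\Ioz$ side) that the chain terminates inside $\Iz(L_0)$ without overshooting --- the paper takes $\phi_0$ to be the global minimizer over $[0,2\pi)$ and derives a contradiction from a \emph{single} shift: if $\phi_0\in\Izo(L_0)$, then $\overline{G}(\phi_0-2\pi/L,L_0)<\overline{G}(\phi_0,L_0)$ already contradicts minimality, no matter where the shifted point lands. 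This eliminates all of the termination bookkeeping you identify as the main obstacle, at the mild cost of assuming the infimum is attained (which holds for continuous patterns; your direct comparison avoids even that). Your residue-class choice of $\psi$ on $\Io(L_0)$ is a harmless variant of the paper's choice of the minimizer of $f_0$ inside $\Iz(L_0)$; both rest on $f_0$ having period $2\pi/L$ and $|\Iz(L_0)|>2\pi/L$.
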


	\begin{proof}	
	    Since $\overline{ G}(\phi,L_0)$ is periodic and $|\mathcal{I}|=2\pi$, 
	    \begin{align}
		\phi_0
		=\arg \inf_{\phi \in [0, 2\pi) } \overline{ G}(\phi,L_0)
		=\arg \inf_{\phi \in \mathcal{I} } \overline{ G}(\phi,L_0),
		\end{align} 
		where $\mathcal{I}$ is defined in~\eqref{big_I} and $\mathcal{I}=\Iz(L_0) \cup \IzC(L_0)$, where $\IzC(L_0)$ is defined according to~\eqref{Ic}.
		Hence, to prove the lemma, it is sufficient to show that $\phi_0 \in \Iz(L_0)$ or, equivalently, that
		\begin{align}\label{lemclaim}
		\phi_0 \notin \IzC(L_0)
		\end{align}	 
		since $\IzC(L_0)=\mathcal{I}\setminus \Iz(L_0)$.

        We express $\overline G(\phi,L_0)$ in terms of $f_N(\phi)$, where these are defined according to~\eqref{G} and~\eqref{fa}, respectively, as	
		\begin{equation} \label{inf_for}
		\begin{split}
		\overline{ G}(\phi,L_0)&= \big(f_0(\phi)+ b f_{L_0}(\phi)\big)\big/L_0,
		\end{split}
		\end{equation} 
		with $b=(2L_0-L)/(L-L_0)>0$. 
		
         We note that Lemma~\ref{Lem_f} property~\ref{Lem_f:p4} and \ref{Lem_f:p5} holds in both cases. Also, since $L_0>L/2$ implies that $L_0\ge\ceil{L/2}$, Lemma~\ref{Lem_f} property \ref{Lem_f:p6} and \ref{Lem_f:p7} holds.
        Moreover, we know from Lemma~\ref{Lem:L0inM} that $L_0\in\M$. Lastly, 
         we recall that~\eqref{condition:restated} implies $\fiB>\pi/L$ (Lemma~\ref{lem:condition:fiB}), and hence we let $\fiB> \pi/L$ and proceed to show~\eqref{lemclaim}.
	
		By~\eqref{eq:IzChat:def} we have  
		$\IzC(L_0)=\Izo(L_0)\cup \Io(L_0)\cup \Ioz(L_0)$, where these intervals are nonempty and defined according to~\eqref{I12},~\eqref{I2} and~\eqref{I21}, respectively.
		To prove~\eqref{lemclaim}, 
		we use contradictions to show that $\phi_0 \notin \Io(L_0)$,  $\phi_0 \notin \Izo(L_0) $ and $\phi_0 \notin \Ioz(L_0)$. 
		
		\underline{First, suppose $\phi_0 \in \Io(L_0)$}, Lemma~\ref{Lem_f} property \ref{Lem_f:p5} implies
		\begin{align}
		f_{L_0}(\phi) &> (L-L_0)G_{\mathrm{SL}}, ~\phi \in \Io(L_0).
		\end{align}
		Substituting in~\eqref{inf_for} and taking into account the assumption that $\phi_0\in \Io(L_0)$, we get
		\begin{align}\label{G_fiz_I1}
		\overline{ G}(\phi_0,L_0)>\big(f_0(\phi_0)+ (2L_0-L)G_{\mathrm{SL}} \big)\big/L_0. 
		\end{align}
		Now we attempt to find a $\phi' \notin \Io(L_0)$ such that $\overline{ G}(\phi',L_0)<\overline{ G}(\phi_0,L_0)$. Since $L_0>L/2$ then
		\begin{align}\nonumber
			|\Iz(L_0)|=(L_0+1)2\pi/L-2\fiB>\pi+2\pi/L-2\fiB.
		\end{align} 
		 Recalling that $\fiB<\pi/2$, we get $|\Iz(L_0)|>2\pi/L$. Adding that to the fact that $f_0(\phi)$ is periodic with $2\pi/L$ (Lemma~\ref{Lem_f} property~\ref{Lem_f:p2}), we deduce that $\exists$ $\phi'  \in  \Iz(L_0) $ such that $f_0(\phi') = \inf f_0(\phi) $. On the other hand, using Lemma~\ref{Lem_f} property~\ref{Lem_f:p3} we have
		\begin{align}
		f_{L_0}(\phi) < (L-L_0) G_{\mathrm{SL}},  \quad\phi \in \Iz(L_0). 
		\end{align}
		Following this we can upper bound $\overline{ G}(\phi',L_0)$ as
		\begin{align}\label{G_fip_I1}
		\overline{ G}(\phi',L_0)	< \big(f_0(\phi')+ (2L_0-L)G_{\mathrm{SL}} \big) \big/L_0. 
		\end{align}
		Comparing~\eqref{G_fip_I1} with~\eqref{G_fiz_I1}  and taking into account the fact that $f_0(\phi')\leq f_0(\phi_0)$ we deduce that 
		\begin{align}
		\overline{ G}(\phi',L_0)< 	\overline{ G}(\phi_0,L_0) .
		\end{align}	
		This contradicts the claim that the infimum of $\overline{G}(\phi,L_0)$, $\phi_0 \in \Io(L_0) $, therefore $\phi_0 \notin \Io(L_0)$.
		
		\underline{Secondly, suppose $\phi_0 \in \Ioz(L_0)$},
		to find a contradiction to this claim, we let $\phi'=\phi_0-2\pi/L$. By periodicity of $f_0(\phi)$, we have that $f_0(\phi')=f_0(\phi_0)$, then
		\begin{align}
		\overline{ G}(\phi',L_0)&= \big(f_0(\phi_0)+ b f_{L_0}(\phi') \big) \big/L_0.
		\end{align}  
		Moreover, by Lemma~\ref{Lem_f} property~\ref{Lem_f:p6} we have $f_{L_0}(\phi')< f_{L_0}(\phi_0)$. Therefore,
		\begin{align}
		\overline{ G}(\phi',L_0)&<  \big(f_0(\phi_0)+ bf_{L_0}(\phi_0) \big) \big/L_0=\overline{ G}(\phi_0,L_0).
		\end{align}
		This is a contradiction. So we can conclude that $\phi_0 \notin \Izo(L_0)$. 
		
		\underline{Lastly, suppose $\phi_0 \in \Ioz(L_0)$},
		by similar argument to the previous one, if we let $\phi'=\phi_0+2\pi/L$ and use Lemma~\ref{Lem_f} property~\ref{Lem_f:p7} we can deduce that $\overline{ G}(\phi',L_0)<\overline{ G}(\phi_0,L_0)$ and thus $\phi_0 \notin \Ioz(L_0)$.  
		
		We have shown that if~\eqref{condition:restated} holds then $\phi_0 \notin \Io(N)$, $\phi_0\notin \Ioz(N)$ and $\phi_0 \notin \Izo(N)$. Therefore, $\phi_0\notin \IzC(L_0)$ and this concludes the proof of the Lemma.
	\end{proof}
Now we are set to show the main result of Theorem~\ref{th2}, i.e., claim (ii). We recall that claim (i) was already proven in Lemma~\ref{lem:condition:fiB}. By assumption, the gain of the highest sidelobes $G_{\mathrm{SL}}$ satisfies~\eqref{condition}, which is restated in~\eqref{condition:restated}.  Then, we want to prove that the solution to~\eqref{L1_opt}, with the constraint $L_0\leq K$ relaxed, is~\eqref{L1*}, that is $\Lzt=\lceil L/2 \rceil$, $\tilde{L}_1=L-\Lzt = \floor{L/2}$. To prove this, it is enough to show that for $L_0\in\mathcal{L}^{+}\setminus\{\ceil{L/2}\}$, 
\begin{equation}\label{infCond}
\inf_{\phi\in[0, 2\pi)}
\overline{ G}(\phi,L_0)  <  \inf_{\phi\in [0, 2\pi) } \overline{ G}(\phi,\ceil{L/2}),
\end{equation} 
where $\overline{ G}(\phi,L_0)$ is the equivalent radiation pattern given by~\eqref{G_eq} and $\mathcal{L}^+$  is defined in~\eqref{eq:Lplus:def}.  

We tackle the demonstration of~\eqref{infCond} in two steps. In the first step we will show that for $L_0\in\mathcal{L}^+\setminus\{\ceil{L/2}\}$,
\begin{equation}\label{halfway_0}
\inf_{\phi\in [0,2\pi)}	\overline{ G}(\phi,L_0) < \inf_{\phi \in \Iz(L_0)} \overline{ G}(\phi,\ceil{L/2}).
\end{equation}
And in the second step, we will demonstrate that
\begin{equation}\label{InfLodd}
\inf_{\phi\in [0,2\pi]}	\overline{ G}(\phi,\ceil{L/2}) = \inf_{\phi \in \Iz(L_0)} \overline{ G}(\phi,\ceil{L/2}).
\end{equation}
We start the first step by forming 
\begin{align}
D(L_0) &= \overline{ G}(\phi,\ceil{L/2})-\overline{ G}(\phi,L_0) \nonumber\\
&=\bigg(\frac{1}{\ceil{L/2}}-\frac{1}{L_0} \bigg ) \sum_{l=0}^{\ceil{L/2}-1}G_l(\phi) \nonumber \\
&\quad +\bigg(\frac{1}{\floor{L/2}} -  \frac{1}{L_0} \bigg)  \sum_{l=\ceil{L/2}}^{L_0-1}G_l(\phi)  \nonumber\\
&\quad+\bigg(\frac{1}{\floor{L/2}} - \frac{1}{L-L_0}\bigg) \sum_{l=L_0}^{L-1}G_l(\phi) \nonumber \\ 
&\geq \bigg(\frac{1}{\ceil{L/2}}-\frac{1}{L_0} \bigg) \sum_{l=0}^{L_0-1}G_l(\phi) \nonumber \\
&\quad+\bigg(\frac{1}{\floor{L/2}}-\frac{1}{L-L_0} \bigg) \sum_{l=L_0}^{L-1}G_l(\phi) \nonumber\\
&= b \bigg(\sum_{l=0}^{L_0-1}G_l(\phi)-a\frac{L_0}{L-L_0} f_{L_0}(\phi)\bigg), \label{diff2}
\end{align}		
where the inequality follows since $1/\floor{L/2}\ge1/\ceil{L/2}$ and where
$f_{L_0}(\phi)$ is defined in~\eqref{fa}, $a=\ceil{L/2}/\floor{L/2}$ and $b=(L_0-\ceil{L/2})/(L_0\ceil{L/2}) $.  

From Lemma~\ref{Lem:L0inM} we can deduce that $L_0\in\mathcal{L}^+\setminus\{\ceil{L/2}\}$, implies $L_0\in\mathcal{M}$. Hence, we can readily use Lemma~\ref{Lem_f} property~\ref{Lem_f:p3} to get
\begin{align}
    \label{eq:fL0:bound}
    f_{L_0}(\phi) \leq (L-L_0)G_{\mathrm{SL}},\quad  \phi \in  \Iz(L_0).
\end{align}
Moreover, since~\eqref{condition:restated} holds, Lemma~\ref{Lem_f} property~\ref{Lem_f:p4} is applicable with $Q=L_0$ and 
\begin{align}
        \label{eq:sum:Gl:bound}
    \sum_{l=0}^{L_0-1} G_l(\phi) > a L_0 G_{\mathrm{SL}},\quad \phi \in \Iz(L_0).
\end{align}
Substituting~\eqref{eq:fL0:bound} and~\eqref{eq:sum:Gl:bound} into~\eqref{diff2}, and noting that $b>0$ for $L_0\in\mathcal{L}^+\setminus\{\ceil{L/2}\}$, yields that $D(L_0)>0$ for $\phi\in\Iz(L_0)$. This implies that
\begin{equation}\label{halfway}
\inf_{\phi \in \Iz(L_0)} \overline{ G}(\phi,L_0)< \inf_{\phi \in \Iz(L_0)} \overline{ G}(\phi,\ceil{L/2}).
\end{equation}
Finally, since
\begin{equation}
    \inf_{\phi\in [0,2\pi)}	\overline{ G}(\phi,L_0) \leq \inf_{\phi \in \Iz(L_0)} \overline{ G}(\phi,L_0),
\end{equation}
we have shown that~\eqref{halfway_0} holds. 
By this we are halfway through the proof of~\eqref{infCond}. To complete it we proceed to the second step: to show~\eqref{InfLodd}. We divide the demonstration in two cases, the case $L$ is even and the case $L$ is odd. 

Suppose $L$ is even, then according to Lemma~\ref{Lem_f} property~\ref{Lem_f:p2} $\overline{ G}(\phi,L/2)=2f_0(\phi)/L$ is periodic with period $2\pi/L$. 
Moreover, since $\fiB<\pi/2$, then $| \Iz(L_0)|>2\pi/L$ and~\eqref{InfLodd} holds (when $L$ is even).

Suppose $L$ be odd, then $\min\mathcal{L}^+ = \ceil{L/2}$.  Lemma~\ref{Lem_inf}, for the special case $L_0=\ceil{L/2}$, yields
\begin{align}
    \label{eq:inf:LO-lower}
\inf_{\phi\in [0,2\pi]}	\overline{ G}(\phi,\ceil{L/2}) = \inf_{\phi \in \Iz(\ceil{L/2})} \overline{ G}(\phi,\ceil{L/2}).
\end{align}
By definition $	\Iz(\ceil{L/2}) \subset \Iz(L_0)$, for $L_0\in\mathcal{L}^+\setminus\{\ceil{L/2}\}$, and \eqref{eq:inf:LO-lower} therefore implies that~\eqref{InfLodd} holds (when $L$ is odd). 

We have showed that~\eqref{halfway_0} and~\eqref{InfLodd} holds, 
therefore~\eqref{infCond} holds and this ends the proof of the theorem.
\end{proof}

\section{Proof of Corollary~\ref{cor:1} }
\label{appendix_Cor1}
\begin{proof}
	Let us assume that~\eqref{c_c} holds. Since $f_0(\phi)$ is periodic with period $2\pi/L$ (Lemma~\ref{Lem_f}~property~\ref{Lem_f:p2}) and $|\Iz(L-1)|>2\pi/L$, where these are defined according to~\eqref{fa} and~\eqref{I1}, respectively, then~\eqref{c_c} implies
	\begin{align}
	\sum_{l=0}^{L-1}G_l(\phi)&>	aLG_{\mathrm{SL}}, &\phi \in \Iz(L-1), \nonumber\\
	\sum_{l=0}^{L-2}G_l(\phi)&>aLG_{\mathrm{SL}} - G_{L-1}(\phi) ,& \phi \in \Iz(L-1).
	\end{align}
	where $a=\ceil{L/2}/\big (\floor{L/2} \big) \geq 1$. 
	Based on~\eqref{I0sl}, $G_{L-1}(\phi) \leq G_{\mathrm{SL}}$ when $\phi \in \Iz(L-1)$.  Hence, $aLG_{\mathrm{SL}} - G_{L-1}(\phi)\geq a(L-1)G_{\mathrm{SL}} $ and thus~\eqref{condition} holds since $\Iz(L-1)=\mathcal{J}$. 
\end{proof}

\end{appendices}

\bibliography{ref}
\bibliographystyle{ieeetr}

\end{document}